\documentclass[10pt,twocolumn]{article}

\usepackage[T1]{fontenc}
\usepackage[utf8]{inputenc}
\usepackage{lmodern}
\usepackage{microtype}
\usepackage{titlesec}
\usepackage{etoolbox}
\usepackage[font=small,labelfont=bf]{caption}

\usepackage{amsmath,amssymb,amsthm,mathtools}
\usepackage{graphicx}
\usepackage{subcaption}
\usepackage{booktabs}

\usepackage{hyperref}
\hypersetup{
    colorlinks=true,
    linkcolor=black,
    citecolor=black,
    urlcolor=black
}

\usepackage{cleveref}
\usepackage{geometry}
\newtheorem{theorem}{Theorem}
\newtheorem{lemma}[theorem]{Lemma}

\newtheorem{condition}[theorem]{Condition}
\newtheorem{definition}[theorem]{Definition}
\newtheorem{remark}[theorem]{Remark}

\renewcommand{\qedsymbol}{$\blacksquare$}
\newcommand{\placeqed}{\nobreak\hfill \ensuremath{\blacksquare}}

\usepackage{tikz}
\newcommand{\blackdashedline}{\raisebox{2pt}{\tikz{\draw[-,black!40!black,dashed,line width = 0.9pt](0,0) -- (5mm,0);}}}
\newcommand{\blackline}{\raisebox{2pt}{\tikz{\draw[-,black!40!black,line width = 0.9pt](0,0) -- (5mm,0);}}}

\newcommand{\reddashedline}{\raisebox{2pt}{\tikz{\draw[-,black!40!red,dashed,line width = 0.9pt](0,0) -- (5mm,0);}}}

\titleformat{\section}
  {\normalfont\bfseries\normalsize}
  {\thesection}{0.6em}{}

\titleformat{\subsection}
  {\normalfont\bfseries\small}
  {\thesubsection}{0.6em}{}

\titleformat{\subsubsection}
  {\normalfont\bfseries\footnotesize}
  {\thesubsubsection}{0.6em}{}

\titlespacing*{\section}
  {0pt}{1.1ex plus 0.4ex minus 0.2ex}{0.6ex}

\titlespacing*{\subsection}
  {0pt}{0.9ex plus 0.3ex minus 0.2ex}{0.5ex}

\titlespacing*{\subsubsection}
  {0pt}{0.7ex plus 0.2ex minus 0.1ex}{0.4ex}

\newcommand\blfootnote[1]{%
  \begingroup
  \renewcommand\thefootnote{}%
  \footnote{$^*$#1}%
  \addtocounter{footnote}{-1}%
  \endgroup
}

\makeatletter
\renewenvironment{thebibliography}[1]
     {\section*{\refname}%
      \@mkboth{\MakeUppercase\refname}{\MakeUppercase\refname}%
      \list{\@biblabel{\@arabic\c@enumiv}}%
           {\settowidth\labelwidth{\@biblabel{#1}}%
            \setlength{\itemsep}{2pt} 
            \setlength{\parsep}{2pt}  
            \setlength{\leftmargin}{\labelwidth}
            \addtolength{\leftmargin}{\labelsep}
            \usecounter{enumiv}%
            \let\p@enumiv\@empty
            \renewcommand\theenumiv{\@arabic\c@enumiv}}%
      \sloppy
      \clubpenalty4000
      \widowpenalty4000
      \sfcode`\.=1000\relax}
     {\def\@noitemerr{\@latex@warning{Empty `thebibliography' environment}}%
      \endlist}
\makeatother
\apptocmd{\thebibliography}{\footnotesize}{}{}

\title{\textbf{Efficient stochastic model-predictive control based on the meta-state-space representation}$^*$\\(extended version)}

\author{
    Bendegúz Györök\textsuperscript{1}, Roland Tóth\textsuperscript{1,2}, Maarten Schoukens\textsuperscript{2}, Tamás Péni\textsuperscript{1}\\
    \small\textsuperscript{1}\textit{Systems and Control Laboratory, HUN-REN Institute for Computer Science and Control, Budapest, Hungary}\\
    \small\textsuperscript{2}\textit{Control Systems Group, Eindhoven University of Technology, Eindhoven, The Netherlands}\\
    \small Emails: \texttt{gyorokbende@sztaki.hu, r.toth@tue.nl, m.schoukens@tue.nl, peni@sztaki.hu}
}

\date{} 

\begin{document}
\twocolumn[
\maketitle
\begin{abstract}
\small
\emph{Stochastic model-predictive control} (SMPC) has evolved to a powerful framework for the control of stochastic dynamical systems. SMPC utilizes a probabilistic uncertainty description to provide a systematic trade-off between the control objective and constraint satisfaction in a statistical sense. However, the majority of existing SMPC methods face challenges related to computational tractability due to the need for stochastic inference. Approaches that apply accurate inference are computationally demanding, which can lead to serious limitations in the implementability of these methods. Hence, in practice, the uncertainty propagation and the resulting distributions are typically approximated, e.g., by Gaussian distributions. These approximations promote computational efficiency, but are often too conservative, becoming a limiting factor in the representation of stochastic state evolution and the implied guarantees. To overcome this fundamental limitation of SMPC approaches, we propose a novel formulation based on the \emph{meta-state-space} (MSS) representation of stochastic dynamical systems. The proposed MSS-based SMPC scheme offers a computationally efficient way to forward propagate the uncertainty with a flexible and highly accurate approximation of the probabilistic system description. With the presented method, the entire output probability density function can be directly shaped, which is unprecedented among existing SMPC techniques. Finally, we provide a detailed theoretical analysis and demonstrate the effectiveness of the proposed methodology via an extensive simulation study.
\end{abstract}
\vspace{0.5em}
\newline
\noindent\small\textbf{Keywords:} Model predictive control, Stochastic optimal control, Learning-based control
\vspace{1.5em}
]
\blfootnote{This project has been supported by the Air Force Office of Scientific Research under award number FA8655-23-1-7061. Corresponding author: B.M.~Györök.}

\section{Introduction}\label{sec:intro}
\emph{Model-predictive control} (MPC) is a widely used optimization-based control technique that has achieved success mainly because it guarantees optimality with respect to a general objective function while also considering various constraint types. \emph{Stochastic model-predictive control} (SMPC) can be interpreted as a probabilistic version of the classical MPC problem by modelling disturbances, model mismatch, and measurement noise as stochastic processes. In contrast to nominal MPC formulations, SMPC incorporates probabilistic descriptions of uncertainty directly into the prediction model and enforces constraints in a statistical sense with a prescribed probability. For a comprehensive overview of SMPC approaches, see~\cite{mesbah_stochastic_2016}. In practice, SMPC methods have been successfully utilized in various application fields, e.g., for process control \cite{van_hessem_stochastic_2006}, in epidemic management \cite{buelta_chance-constrained_2024,herceg_scenario-based_2025}, in the automotive industry \cite{liu_stochastic_2015}, etc.

Early SMPC developments primarily focused on linear systems with additive stochastic disturbances, where uncertainty propagation and chance constraint evaluation admit tractable analytic reformulations \cite{schwarm_chance-constrained_1999,cannon_probabilistic_2009}. Such approaches preserve computational efficiency, but rely heavily on Gaussian assumptions and linear dynamics. For nonlinear systems, a common strategy is to approximate the propagated uncertainty with Gaussian distributions, obtained, e.g., through linearization or moment-matching techniques. This modelling choice is largely motivated by computational tractability, since employing more complex probability distributions to describe state evolution quickly leads to intractable inference. However, while promoting computational efficiency, these approximations can result in conservative predictions and inaccurate probability estimates, particularly in the presence of strong nonlinearities or non-Gaussian disturbances \cite{mesbah_stochastic_2016}. Similar approximations are frequently employed for chance constraint evaluation, where exact computation of constraint satisfaction probabilities is generally intractable.

More recently, increasing attention has been given to SMPC formulations based on \emph{Gaussian process} (GP) models, often referred to as GP-MPC. In these approaches, Gaussian processes are employed to model unknown (or partially known) system dynamics in a data-driven manner. The GP formulation provides both a mean prediction and an explicit quantification of model uncertainty, which naturally integrates well into the SMPC framework \cite{kocijan_gaussian_2004}. GP-MPC schemes have been successfully applied to nonlinear systems, and several works have investigated stability, constraint satisfaction, and safe learning properties \cite{hewing_cautious_2020,koller_learning-based_2018}. However, exact uncertainty propagation through GP models is generally intractable, and practical implementations rely on the same approximations (linearization, moment matching), which may again introduce conservatism into the uncertainty characterization.

Despite the maturity of the existing literature, SMPC methods continue to face fundamental challenges related to computational tractability, uncertainty propagation in nonlinear systems, and accurate yet efficient evaluation of chance constraints. To overcome these limitations, we adopt the recently proposed \emph{meta-state-space}~(MSS) representation of stochastic dynamical systems~\cite{beintema_meta-statespace_2024} and develop an MSS-based SMPC scheme. The employed modeling approach transforms the original stochastic state-space into a \emph{deterministic} meta-state-space, with all uncertainty captured through the model output probability function. Due to the deterministic nature of the (generally nonlinear) meta-state transition function, uncertainty propagation is no longer required during the solution of the MPC optimization problem. This fundamentally solves the computational tractability issues typically associated with SMPC methods. After forward simulation of the deterministic meta-states, chance constraints can be evaluated directly. Consequently, the proposed chance constraint evaluation method requires minimal approximations (in fact, it requires no approximation for single-output systems), in contrast to existing SMPC techniques.

Beyond these benefits, the proposed MSS-based SMPC formulation holds another two key advantages over conventional SMPC approaches. First, the control objective can be formulated directly at the output distribution level, enabling the specification of a desired output probability density function. This is in contrast with most existing methods, which formulate the control objective for the expected value, occasionally augmented with variance-related terms. Second, meta-state-space models are typically parametrized and identified using neural-network-based representations. During model learning, a so-called encoder network can be co-estimated with the model parameters to estimate the initial meta-state values from past input-output samples. Since state estimation is a well-known challenge in SMPC (see, e.g., \cite{cannon_stochastic_2012}), the proposed approach addresses this issue inherently by embedding the encoder, acting as a meta-state observer, directly within the SMPC formulation.

The contributions of this paper can be summarized as:
\begin{itemize}
    \item [C1] Formulating an SMPC approach based on the meta-state-space formulation with cost function options, terminal ingredients, and chance constraint evaluation, setting up all the necessary aspects for practical implementations.
    \item [C2] Proving recursive feasibility and asymptotic stability of the proposed meta-state-space-based SMPC scheme.
    \item [C3] Extending the proposed approach to handle online-changing set-points with recursive feasibility and stability guarantees.
    \item [C4] Demonstrating the effectiveness of the proposed SMPC formulation with numerical examples.
\end{itemize}
The paper is organized as follows: Section~\ref{sec:MSS_approach} provides a brief description of the meta-state-space representation of stochastic dynamical systems. Then, Section~\ref{sec:MSS-MPC} introduces the proposed SMPC formulation based on the meta-state-space form, specifically tailored for a constant control objective. After the SMPC formulation, recursive feasibility and asymptotic stability are proven; thus, Section~\ref{sec:MSS-MPC} covers Contributions 1 and 2 of the paper. Then, Section~\ref{sec:changing_ref} extends the methodology to handle online-changing control objectives, i.e., introduces Contribution 3. Section~\ref{sec:num_examples} demonstrates the effectiveness of the proposed approach via a rigorous numerical case study. Finally, conclusions are drawn in Section~\ref{sec:conclusion}.

\subsection{Notation}
We denote the Gaussian distribution with mean $\mu$ and variance $\sigma$, as $\mathcal{N}(\mu, \sigma)$, while, $\mathcal{U}(a, b)$ denotes the uniform distribution with support from $a$ to $b$, as $\mathcal{U}(a,b)\in\left[a,b\right]$. By $x_{a:b}$, we denote the sequence $\{x_k\}_{k=a}^b$, $a, b\in \mathbb{Z}$ with $b\geq a$. The set of integers in the closed interval $[a,b]$ is denoted by $\mathbb{I}_a^b$. Furthermore, $\mathcal{K}_\infty$ denotes the set of functions $\alpha:\mathbb{R}_{\geq 0} \rightarrow \mathbb{R}_{\geq 0}$ that are continuous, strictly increasing, and unbounded with $\alpha(0)=0$. Due to this definition, all $\alpha\in\mathcal{K}_\infty$ functions are invertible, such that $\alpha^{-1}(y)$ yields a unique $s\geq 0$ value for which $\alpha(s)=y$. We denote the quadratic norm w.r.t. a symmetric, positive definite matrix $A$, i.e., $A\succ 0$, by $\|x\|_A^2 = x^\top A x$. The Euclidean vector norm, i.e., the 2-norm, is denoted as $\|\cdot\|_2$. Let $\mathbb{K}^{n}$ denote the set of symmetric, positive-definite matrices in $\mathbb{R}^{n\times n}$. We denote the minimal and maximal eigenvalue of a symmetric matrix $A=A^\top$ as $\lambda_\mathrm{min}(A)$, and $\lambda_\mathrm{max}(A)$, respectively. Let $\mathcal{C}_2$ denote the set of continuous functions whose first and second derivatives exist and are continuous. Let 
$\mathcal{P}(\mathbb{R}^{n})$ be the set of all density functions $p: \mathbb{R}^{n} \rightarrow \mathbb{R}$ that are non-negative and Lebesgue integrable. Note that with the Borel $\sigma$-algebra over $\mathbb{R}^{n}$, these functions are Lebesgue integrable over all Borel sets of $\mathbb{R}^{n}$.

\section{Meta-state-space representation}\label{sec:MSS_approach}
The dynamics of the stochastic system to be controlled are considered to be defined by a \emph{discrete-time} nonlinear SS representation:
\begin{subequations}
\label{eq:data_gen_sys}
\begin{align}
    \mathbf{x}_{k+1}&=f(\mathbf{x}_k, u_k, \mathbf{v}_k),\\
    \mathbf{y}_k&=h(\mathbf{x}_k, u_k, \mathbf{e}_k),\label{eq:DT-y}
\end{align}
\end{subequations}
where $k\in\mathbb{Z}$ is the discrete time index, $\mathbf{x}_k$ random variable represents the state, taking values in $\mathbb{R}^{n_\mathrm{x}}$, $u_k\in\mathbb{R}^{n_\mathrm{u}}$ is the deterministic control input signal, and $\mathbf{y}_k$ represents the measured output, which is a random variable taking values from $\mathbb{R}^{n_\mathrm{y}}$. The state transition is influenced by an i.i.d. stationary process noise $\mathbf{v}_k$ with \emph{probability density function} (pdf) $p^\mathbf{v}\in\mathcal{P}(\mathbb{R}^{n_\mathrm{v}})$, and the measured output is corrupted by some i.i.d. stationary measurement noise $\mathbf{e}_k$ with pdf $p^\mathbf{e} \in \mathcal{P}(\mathbb{R}^{n_\mathrm{e}})$. Moreover, the state transition $f:\mathbb{R}^{n_\mathrm{x}} \times \mathbb{R}^{n_\mathrm{u}}\times \mathbb{R}^{n_\mathrm{v}} \rightarrow \mathbb{R}^{n_\mathrm{x}}$, and the output map $h:\mathbb{R}^{n_\mathrm{x}} \times \mathbb{R}^{n_\mathrm{u}}\times \mathbb{R}^{n_\mathrm{e}}\rightarrow \mathbb{R}^{n_\mathrm{y}}$ are both (possibly) nonlinear functions.

For ease of notation, the state and output pdfs at time $k$ are indicated as $p_k^\mathbf{x}$ and $p_k^\mathbf{y}$, respectively. Using this notation, all future state and output probability distributions can be expressed, given an initial state distribution $p_0^\mathbf{x}\in\mathcal{S}_\mathbf{x}^0$ and a known input sequence $u_{0:\infty}$. Here, $\mathcal{S}_\mathbf{x}^0$ denotes the set of all possible initial state distributions with $\mathcal{S}_\mathbf{x}^0\subseteq \mathcal{P}(\mathbb{R}^{n_\mathrm{x}})$. Therefore,
\begin{subequations}\label{eq:chapman_kolg}
\begin{align}
    p_{k+1}^\mathbf{x} &= F(p_k^\mathbf{x}, u_k),\\
    p_{k}^\mathbf{y} &= H(p_k^\mathbf{x}, u_k),\\
    p_0^\mathbf{x} &\in \mathcal{S}_\mathbf{x}^0,
\end{align}
\end{subequations}
where $F:\mathcal{P}_\mathbf{x} \times \mathbb{R}^{n_\mathrm{u}}\rightarrow \mathcal{P}_\mathbf{x}$ and $H:\mathcal{P}_\mathbf{x} \times \mathbb{R}^{n_\mathrm{u}}\rightarrow \mathcal{P}_\mathbf{y}$ are the \emph{Chapman-Kolmogorov} equations~\cite{paul_stochastic_2013} with $\mathcal{P}_\mathbf{x}\triangleq \mathcal{P}(\mathbb{R}^{n_\mathrm{x}})$ and $\mathcal{P}_\mathbf{y}\triangleq \mathcal{P}(\mathbb{R}^{n_\mathrm{y}})$. In \eqref{eq:chapman_kolg}, $F$ recursively propagates the state distribution in time, starting from $p_0^\mathbf{x}$, while $H$ provides the output distributions. With this notation, we can characterize the set of all possible state distributions that can occur along any trajectory of \eqref{eq:data_gen_sys}, as
\begin{multline}
    \mathcal{S}_\mathbf{x}\triangleq \{p_k^\mathbf{x}\mid p_{k+1}^\mathbf{x} = F(p_k^\mathbf{x}, u_k), p_0^\mathbf{x} \in \mathcal{S}_\mathbf{x}^0,\\
    u_{0:\infty}\in\mathbb{U}, k\geq 0\},
\end{multline}
where $\mathbb{U}$ is the set of all possible (deterministic) input trajectories, i.e., $\mathbb{U} = \{\{u_k\}_{k=0}^\infty \mid u_k\in\mathbb{R}^{n_\mathrm{u}}, k\geq 0\}$.
As shown in~\cite{beintema_meta-statespace_2024}, if $\mathcal{S}_\mathbf{x}$ can be uniquely parametrized of order~$n_\mathrm{z}$, then there exists an equivalent representation of \eqref{eq:data_gen_sys}, in the so-called \emph{meta-state-space} (MSS) form:
\begin{subequations}\label{eqs:meta-system}
\begin{align}
    z_{k+1} &= f^\mathrm{z}(z_k, u_k),\\
    p_k^\mathbf{y} &= h^\mathrm{z}(z_k, u_k),
\end{align}
\end{subequations}
where $z_k\in\mathbb{R}^{n_\mathrm{z}}$ is the meta-state, $f^\mathrm{z}:\mathbb{R}^{n_\mathrm{z}}\times \mathbb{R}^{n_\mathrm{u}}\rightarrow \mathbb{R}^{n_\mathrm{z}}$ is the meta-state-transition function, while $h^\mathrm{z}:\mathbb{R}^{n_\mathrm{z}}\times \mathbb{R}^{n_\mathrm{u}}\rightarrow \mathcal{P}_\mathbf{y}$ is the output probability function. For a detailed derivation, proof, and conditions of existence, see~\cite{beintema_meta-statespace_2024}. In practice,~\eqref{eqs:meta-system} is typically parametrized by a neural network-based representation and directly estimated from input-output data, generated by \eqref{eq:data_gen_sys}, in the form of
\begin{equation}\label{eqs:meta-ANN-model}
    \hat{z}_{k+1} = f^\mathrm{z}_\theta(\hat{z}_k, u_k),\quad
    \hat{p}_k^\mathbf{y} = h_\theta^\mathrm{z}(\hat{z}_k, u_k),
\end{equation}
where $\hat{z}_k\in\mathbb{R}^{n_{\hat{\mathrm{z}}}}$ is the model meta-state, $\hat{p}_k^\mathbf{y}$ is the predicted output pdf at time $k$, and $f^\mathrm{z}_\theta$, $h^\mathrm{z}_\theta$ are both parametrized by $\theta\in\mathbb{R}^{n_\theta}$. The former is implemented as a fully-connected feedforward ANN, while $h^\mathrm{z}_\theta$ is described by the sum of $n_\mathrm{G}$ Gaussian components, as
\begin{equation}\label{eq:h_theta}
    h_\theta^\mathrm{z}(\xi_k) = \sum_{i=1}^{n_\mathrm{G}} w^{(i)}_\theta (\xi_k) \cdot \mathcal{N}\left(\mu^{(i)}_\theta(\xi_k), \Sigma^{(i)}_\theta(\xi_k)\right),
\end{equation}
where $\xi_k = [\hat{z}_k^\top\, u_k^\top]^\top$ is introduced for notational convenience, $w_\theta^{(i)} : \mathbb{R}^{n_\mathrm{z}+n_\mathrm{u}} \rightarrow \mathbb{R}$, $\mu_\theta^{(i)} : \mathbb{R}^{n_\mathrm{z}+n_\mathrm{u}} \rightarrow \mathbb{R}^{n_\mathrm{y}}$, and $\Sigma_\theta^{{(i)}} : \mathbb{R}^{n_\mathrm{z}+n_\mathrm{u}} \rightarrow \mathbb{K}^{n_\mathrm{y}}$ are all parametrized as feedforward neural networks. The parametrization in \eqref{eq:h_theta} can represent any pdf as $n_\mathrm{G}\to \infty$, as the (weighted) sum of Gaussian distributions is a universal approximator \cite{bishop_mixture_1994,goodfellow_deep_2016}. For computational efficiency, typically three separate ANNs are utilized with $n_\mathrm{G}$ number of outputs to describe the weights $w_\theta$, means $\mu_\theta$, and covariance terms $\Sigma_\theta$, respectively. Moreover, to ensure that $\hat{p}_k^\mathbf{y}$ represents a valid \emph{Gaussian Mixture Model} (GMM), i.e., $\sum_{i=i}^{n_\mathrm{G}} w_\theta^{(i)}=1$, and $w_\theta^{(i)}>0$, $\Sigma_\theta^{(i)} \succ 0$, these restrictions are enforced by applying appropriate activation function in the last layer of each ANN in $h_\theta^\mathrm{z}$. Refer to \cite{beintema_meta-statespace_2024} for further details about the parametrization and the approach for identification of such models from data.

For simulating a given meta-state-space model, the initial condition $z_0$ is required. This is equal to knowing the initial state distribution $p^\mathbf{x}_0$ when forward propagating the Chapman-Kolmogorov equations \eqref{eq:chapman_kolg}. However, in general, the initial state distribution can not be accessed, and typically a state distribution estimator is applied in the form of $p(\mathbf{x}_k\mid u_{k-n:k-1}, y_{k-n:k-1})$, where $n\geq 1$ and $y_{k-n:k-1}$ being the observed samples of ${\mathbf y}_{k-n:k-1} $. In the meta-state-space form, this is interpreted as a meta-state observer. In practice, the observer is constructed as a parametric model, and this so-called encoder network is jointly estimated with the MSS model parameters:
\begin{equation}
    \hat{z}_k = \psi_\theta(u_{k-n:k-1}, y_{k-n:k-1}),
\end{equation}
where $\psi_\theta$ is implemented as a fully-connected, feedforward ANN. For further details about the availability of the meta-state observer, refer to \cite{beintema_datadriven_2024}.

A key advantage of the meta-state-space representation is that $f^\mathrm{z}$ is \emph{deterministic}. Intuitively, the meta-state $z_k$ represents the probability distribution of $\mathbf{x}_k$, such that the evolution of the full stochastic state distribution is described by a deterministic process (given an initial condition). This yields a direct and compact representation of stochastic systems that enables the extension of control methods originally developed for deterministic systems to the stochastic setting. This paper proposes the first model-predictive control design method based on meta-state-space models, yielding an efficient and computationally tractable SMPC approach with recursive feasibility and stability guarantees.

\section{Set-point stabilization}\label{sec:MSS-MPC}

\subsection{Problem formulation}
We consider the control of the stochastic dynamical system represented by \eqref{eq:data_gen_sys}. We impose input constraints that are provided as a polytope, i.e., $u_k\in U\subset \mathbb{R}^{n_\mathrm{u}}$, where
\begin{equation}
    U = \{u: H_\mathrm{u} u - h_\mathrm{u} \leq 0\},
\end{equation}
where $H_\mathrm{u}\in\mathbb{R}^{l_\mathrm{u}\times n_\mathrm{u}}$, and $h_\mathrm{u}\in\mathbb{R}^{l_\mathrm{u}}$. Furthermore, we consider box-constraints for the output, in terms of the following chance constraints:
\begin{subequations}\label{eqs:chance_constraints}
\begin{align}
    \mathbb{P}\{\mathbf{y}_k\geq y_\mathrm{min}\} &\geq p_\mathrm{min}, \\\mathbb{P}\{\mathbf{y}_k\leq y_\mathrm{max}\} &\geq p_\mathrm{max},
\end{align}
\end{subequations}
where $p_\mathrm{min}, p_\mathrm{max}\in[0,1]$ are the desired probability levels of satisfying the lower and upper bounds $y_\mathrm{min}, y_\mathrm{max} \in \mathbb{R}^{n_\mathrm{y}}$ element-wise. We consider a user-specified reference output pdf provided as $p_\mathrm{ref}^\mathbf{y}$. The control objective is to stabilize system \eqref{eq:data_gen_sys} around an equilibrium point which corresponds to an output pdf (approximately) matching the reference pdf. The function $p_\mathrm{ref}^\mathbf{y}$ can also change with time (reference distribution tracking); however, in this section, we only consider the constant reference case (set-point distribution case), and later in Section~\ref{sec:changing_ref}, we will handle time-varying $p_\mathrm{ref}^\mathbf{y}$.

\subsection{General MPC setting}
We assume that the identified parameter $\theta$ is such that \eqref{eqs:meta-ANN-model} constitutes a sufficiently accurate MSS representation of the data-generating system over $N$-step horizons, i.e, \eqref{eqs:meta-ANN-model} can be viewed as an MSS representation of the system in the form~\eqref{eqs:meta-system}. Furthermore, we assume that a meta-state observer $z_k=\psi(u_{k-n:k-1}, y_{k-n:k-1})$ is present. We define the finite horizon cost as
\begin{equation}\label{eq:setpoint_cost}
    J_N(z_{0:N}, u_{0:N-1}) \triangleq V_\mathrm{f}(z_N) + \sum_{i=0}^{N-1} \ell(z_i, u_i),
\end{equation}
where $\ell:\mathbb{R}^{n_\mathrm{z}}\times \mathbb{R}^{n_\mathrm{u}}\rightarrow\mathbb{R}$ is a user-specified stage cost, $V_\mathrm{f}:\mathbb{R}^{n_\mathrm{z}}\rightarrow\mathbb{R}$ is the terminal cost, and $N\in\mathbb{Z}_+$ is the horizon length. We propose to choose the control input $u_k$ by solving the following \emph{optimal control problem}~(OCP) at each time step $k$:
\begin{subequations}\label{eq:optim_problem_meta}
\begin{align}
\min_{\substack{z_{k:k+N\vert k}\\u_{k:k+N-1\vert k}}} \quad &J_N(z_{k:k+N\vert k}, u_{k:k+N-1\vert k})\\
\textrm{s.t.} \quad &z_{k\vert k} = \psi\left(u_{k-n:k-1}, y_{k-n:k-1}\right),\label{eq:mpc_encoder}\\
  & z_{k+i+1\vert k} = f^\mathrm{z}(z_{k+i\vert k}, u_{k+i\vert k}),\\
    &\mathbb{P}\{\mathbf{y}_{k+i\vert k} \geq y_\mathrm{min}\} \geq p_\mathrm{min},\label{eq:chance_const_min}\\
  &\mathbb{P}\{\mathbf{y}_{k+i\vert k} \leq y_\mathrm{max}\} \geq p_\mathrm{max},\label{eq:chance_const_max}\\
    & u_{k+i\vert k} \in U,\label{eq:input_constr}\\
  & z_{k+N\vert k} \in \mathcal{Z}_\mathrm{f},
\end{align}
\end{subequations}
where $i\in\mathbb{I}_0^{N-1}$, $\mathcal{Z}_\mathrm{f}$ is the terminal set, defined later, and the pipe notation ($\vert$) is introduced such that $z_{k+i\vert k}$, $u_{k+i\vert k}$, and $\mathbf{y}_{k+i\vert k}$ are the predicted meta-state, input and output (as a random variable), respectively, at time $k+i$ based on time $k$. After solving the optimization problem \eqref{eq:optim_problem_meta}, the predicted optimal meta-state and input sequences are obtained as $z^\star_{k:k+N\vert k}$, $u^\star_{k:k+N-1\vert k}$. Then the first element of the optimal control sequence is applied to system \eqref{eq:data_gen_sys}, at every time step; hence, formally
\begin{equation}
    u_k = \kappa_\mathrm{MPC}(u_{k-n:k-1}, y_{k-n:k-1}) = u^\star_{k\vert k}.
\end{equation}
This is the high-level formulation of the proposed model predictive control approach; however, specifying the specific ingredients can be challenging. For instance, the stage cost should be selected such that it specifies the difference between the realized output pdf and $p_\mathrm{ref}^\mathbf{y}$, terminal ingredients have to be chosen such that recursive feasibility and stability can be guaranteed, and chance constraints should be evaluated in a computationally efficient manner. These aspects are addressed in the following subsections.

\subsection{Stage cost and set-point selection}\label{sec:stage_cost_and_setpoint}
Given a reference output probability density function~$p_\mathrm{ref}^\mathbf{y}$, a straightforward approach is to choose the KL divergence between $p_\mathrm{ref}^\mathbf{y}$ and $h^\mathrm{z}$ as the stage cost:
\begin{equation}\label{eq:stage_cost_KL}
    \ell_{\mathrm{KL}}(z, u) = \mathrm{KL}(p_\mathrm{ref}^\mathbf{y}, h^\mathrm{z}(z, u)),
\end{equation}
where $\mathrm{KL}$ denotes the Kullback-Leibler divergence. However, implementing \eqref{eq:stage_cost_KL} presents multiple challenges:
\begin{itemize}
    \item Even if $p_\mathrm{ref}^\mathbf{y}$ (and also $h^\mathrm{z}$) is a GMM, the KL divergence has no closed-form solution. Available approximations are typically sampling-based (e.g.,~\cite{cui_comparison_2015}) and thus computationally expensive due to repeated evaluations in \eqref{eq:optim_problem_meta}.
    \item Multiple $(z,u)$ pairs may minimize \eqref{eq:stage_cost_KL}, leading to non-unique solutions, which can cause solver divergence and oscillations.
\end{itemize}
Instead of using the KL divergence directly as the stage cost in the MPC optimization, we perform an offline search to find a reference set-point $(\bar{z}, \bar{u})$ satisfying the following requirements: (\textit{i}) satisfy the chance and input constraints in \eqref{eq:optim_problem_meta}, (\textit{ii}) generate an output probability density function close to the desired $p_\mathrm{ref}^\mathbf{y}$, furthermore, to simplify closed-loop stability analysis (\textit{iii}) be an equilibrium point of the MSS system. Finally, (\textit{iv}) the found equilibrium~$\bar{z}$ is controllable. As there is no analytic solution to this problem, we propose an optimization-based approach, first, to satisfy requirements (\textit{i})--(\textit{iii}):
\begin{subequations}\label{eqs:optim_set_distr_search}
\begin{align}
    \min_{z, u} \quad&\mathrm{KL}(p_\mathrm{ref}^\mathbf{y}, h^\mathrm{z}(z, u)),\\
    \text{s.t.}\quad &f^\mathrm{z}(z, u) = z,\\
    &\text{constraints \eqref{eq:chance_const_min}--\eqref{eq:input_constr},}
\end{align}
\end{subequations}
where the Kullback-Leibler divergence needs to be approximated, as in general, there is no closed-form expression for it. Therefore, we take the Monte-Carlo sampling-based method. First, we limit $p_\mathrm{ref}^\mathbf{y}$ to be a sum of $n_\mathrm{G}^\mathrm{ref}$ number of Gaussians\footnote{Note that this is not a restrictive condition, as the sum of Gaussians is a universal approximator and can express any arbitrary pdf with $n_\mathrm{G}^\mathrm{ref}\to\infty$.}, i.e., a GMM, as $p_\mathrm{ref}^\mathbf{y}=\sum\pi_i \mathcal{N}(\mu^\mathrm{ref}_i, \Sigma^\mathrm{ref}_i)$, where $\pi_i$, $\mu_i^\mathrm{ref}, \Sigma_i^\mathrm{ref}$ ($i\in\mathbb{I}_1^{n_\mathrm{G}^\mathrm{ref}}$) denote the weights, means, and covariance matrices corresponding to each Gaussian component. Before setting up the optimization problem, we draw $M$ number of i.i.d. samples from $p_\mathrm{ref}^\mathbf{y}$, acquiring$\{\tilde{y}_i\}_{i=1}^M$. Then, \eqref{eqs:optim_set_distr_search} is approximated as
\begin{equation}\label{eq:KL_approx}
    \mathrm{KL}(p_\mathrm{ref}^\mathbf{y}, h^\mathrm{z}(z, u)) \approx \frac{1}{M}\sum_{i=1}^M \log p_\mathrm{ref}^\mathbf{y}(\tilde{y}_i) / p^\mathbf{y}(\tilde{y}_i\vert z, u).
\end{equation}
Following the above-defined sampling method, with $M\to\infty$ the approximation converges towards the true KL divergence, a property that only holds for the Monte-Carlo sampling-based approximation \cite{hershey_approximating_2007}.

Solving \eqref{eqs:optim_set_distr_search} satisfies (\textit{i})--(\textit{iii}); however, requirement (\textit{iv}) still needs to be handled. Intuitively, even if the realized output pdf $h^\mathrm{z}(\bar{z}, \bar{u})$ is proper, the found set-point might realize a state distribution $\bar{p}^\mathbf{x}$ that is unreachable, e.g., a Dirac-like distribution. Therefore, local controllability should be checked after solving optimization \eqref{eqs:optim_set_distr_search}. Suppose $f^\mathrm{z}$ is twice continuously differentiable, then the following Jacobians can be computed:
\begin{equation}\label{eq:linearized_dyn}
    A^\mathrm{z} = \frac{\partial f^\mathrm{z}(\bar{z}, \bar{u})}{\partial z_k}, \quad B^\mathrm{z} = \frac{\partial f^\mathrm{z}(\bar{z}, \bar{u})}{\partial u_k}.
\end{equation}
Then, if linearization around the equilibrium-point satisfies the controllability rank condition, i.e.,
\begin{equation}\label{eq:controllability_cond}
    \mathrm{rank}\left(\begin{bmatrix}
        B^\mathrm{z} & A^\mathrm{z}B^\mathrm{z} & \cdots & (A^\mathrm{z})^{n_\mathrm{z}-1}B^\mathrm{z}
    \end{bmatrix}\right) = n_\mathrm{z},
\end{equation}
then, there exists a neighbourhood of $\bar{z}$ for which the nonlinear MSS system is controllable. If the found set-point fails the controllability condition, then optimization \eqref{eqs:optim_set_distr_search} needs to be repeated by excluding the previous $(\bar{z}, \bar{u})$ from the search region.
\begin{remark}[Guaranteed controllability]
    Note that the (local) controllability condition \eqref{eq:controllability_cond} can be enforced by directly embedding it into the optimization problem \eqref{eqs:optim_set_distr_search}, e.g., via Gramian-based metrics or determinant/log-det constraints on the linearized dynamics \cite{pasqualetti_controllability_2014,vandenberghe_determinant_1998}. Such formulations are, however, known to significantly increase computational complexity.
\end{remark}
After acquiring an appropriate set-point, the following stage cost is applied in the formulation of \eqref{eq:optim_problem_meta}:
\begin{equation}\label{eq:stage_cost}
    \ell(z, u) = \|z - \bar{z}\|_Q^2 + \|u - \bar{u}\|_R^2,
\end{equation}
where $Q\in\mathbb{R}^{{n_\mathrm{z}\times n_\mathrm{z}}}$ and $R\in\mathbb{R}^{{n_\mathrm{u}\times n_\mathrm{u}}}$ are both symmetric, positive definite weight matrices. The quadratic stage cost simplifies both the online optimization problem and the upcoming stability proof.

\subsection{The mean-matching case}\label{sec:equilibrium}
Rather than specifying a full reference probability density function, it is often sufficient in practice to prescribe reference values for selected moments of the output distribution. The simplest and most common case is when a reference is prescribed only for the expected output value, such that the control objective is to enforce $\mathbb{E}\{\mathbf{y}_k\}=y_\mathrm{ref}$, with optionally minimizing the output variance, see, e.g., \cite{costa_optimal_2012,hakobyan_risk-aware_2019}. In this case, an appropriate set-point can be found by modifying the cost function in optimization problem \eqref{eqs:optim_set_distr_search}, as follows
\begin{subequations}\label{eqs:optim_set_point_search}
\begin{align}
    &\|y_\mathrm{ref} - \mathbb{E}\{\mathbf{y}\}\|_2^2 + \beta \| \mathrm{Var}\{\mathbf{y}\}\|_2^2,\label{eq:cost_expectation_ref}\\
    &\mathbb{E}\{\mathbf{y}\} = \sum_{i=1}^{n_\mathrm{G}} w^{(i)}\mu^{(i)},\\
    &\mathrm{Var}\{\mathbf{y}\} = \sum_{i=1}^{n_\mathrm{G}} w^{(i)}(\Sigma^{(i)} + (\mu^{(i)} - \mathbb{E}\{\mathbf{y}\}) (\mu^{(i)} - \mathbb{E}\{\mathbf{y}\})^\top ),
\end{align}
\end{subequations}
where $w^{(i)}$, $\mu^{(i)}$, $\Sigma^{(i)}$ are associated with each component in the GMM, according to \eqref{eq:h_theta} (with the dependency on $z,u$ omitted for clarity). Moreover, $\beta\in\mathbb{R}_{\geq 0}$ is a user-specified weight that specifies the relative importance of finding a set-point with minimal variance. Setting $\beta=0$ results in the optimization only considering the mean-matching task.

\subsection{Chance constraints}\label{sec:chance_constraints}
Next, we provide a tractable computation method for the chance constraints. Since the output pdf is represented by a GMM, the inequality constraints can be expressed by the following integral form:
\begin{multline}\label{eq:ymax-prob-int}
    \mathbb{P}\{\mathbf{y}_k\leq y_\mathrm{max}\}=\int_{-\infty}^{y_\mathrm{max}} \sum_{i=1}^{n_\mathrm{G}} w^{(i)} \frac{1}{\sqrt{(2\pi)^{n_\mathrm{y}} \det\Sigma^{(i)}}}\\\exp\left(-\frac{1}{2}\left(\upsilon - \mu^{(i)}\right)^\top \left(\Sigma^{(i)}\right)^{-1}\left(\upsilon - \mu^{(i)}\right)\right)~\mathrm{d}\upsilon,
\end{multline}
where $\Sigma^{(i)}$, $w^{(i)}$, and $\mu^{(i)}$ are defined as in \eqref{eq:h_theta}, and dependency on $z$, $u$ is omitted for clarity. However, analytic evaluation of \eqref{eq:ymax-prob-int} is not possible, since there is no closed-form expression for the \emph{cumulative distribution function} (CDF) of a GMM. Several numerical methods can be applied for computing \eqref{eq:ymax-prob-int}, see, e.g., \cite{gen_numerical_1992,botev_normal_2017}. Here, for the sake of computational simplicity, we assume that $\Sigma^{(i)}$ is diagonal for all $i\in\mathbb{I}_1^{n_\mathrm{G}}$. This condition can be directly enforced when estimating \eqref{eqs:meta-ANN-model} by only parametrizing the diagonal terms of each $\Sigma^{(i)}$ matrix. In this special case, the joint CDF has the form:
\begin{equation}\label{eq:chance_constr_eval_simpl}
    \mathbb{P}\{\mathbf{y}_k \leq y_\mathrm{max}\} = \sum_{i=1}^{n_\mathrm{G}} w^{(i)} \prod_{j=1}^{n_\mathrm{y}}\Phi\left(\frac{y_{\mathrm{max},j}-\mu_j^{(i)}}{\sigma_j^{(i)}}\right),
\end{equation}
where $\Phi$ is the CDF of the standard Gaussian distribution, $y_{\mathrm{max},j}$ is the $j^\mathrm{th}$ element of the vector $y_\mathrm{max}$, and similarly, $\mu_j^{(i)}$ is the $j^\mathrm{th}$ element of $\mu^{(i)}$. The same holds for the lower bound constraint; however, we need to implement it utilizing the following relation:
\begin{equation}\label{eq:P_min_max_rels}
    \mathbb{P}\{\mathbf{y}_k \geq y_\mathrm{min}\} = 1 - \mathbb{P}\{\mathbf{y}_k \leq y_\mathrm{max}\}.
\end{equation}
Regarding implementation, $\Phi$ is not available in most automatic differentiation libraries, such as CasADi \cite{andersson_casadi_2019}. We can utilize the more fundamental \emph{Gaussian error function} (erf), which is typically included in these toolboxes, to compute the CDF, as $\Phi(\zeta) = \frac{1}{2} (1 + \mathrm{erf}(\zeta/\sqrt{2}))$.

\subsection{Conditions for stability}\label{sec:stability_conds}
Having defined the stage cost and set-point and addressed the chance constraint evaluation, we now specify formal conditions to guarantee recursive feasibility and closed-loop stability of the MPC. For simplicity, throughout this subsection we assume that $(\bar{z}, \bar{u})=(0,0)$, satisfying the constraints in \eqref{eqs:optim_set_distr_search}. Considering the set-point as the origin is only a technical condition, since it is sufficient to find any appropriate set-point as discussed in Section~\ref{sec:equilibrium}, then shift the origin of the system to this equilibrium point. In conventional deterministic MPC schemes, recursive feasibility and stability guarantees are typically achieved by an appropriate design of terminal ingredients such that the following holds:
\begin{condition}[Terminal ingredients]\label{assum:terminal_region}
    The terminal region $\mathcal{Z}_\mathrm{f}\subset \mathbb{R}^{n_\mathrm{z}}$ is a compact set containing $\bar{z}$. There exists a local stabilizing controller $\kappa_\mathrm{f}:\mathbb{R}^{n_\mathrm{z}}\rightarrow\mathbb{R}^{n_\mathrm{u}}$ such that $\forall z\in\mathcal{Z}_\mathrm{f}$, the following constraints are satisfied:
    \begin{subequations}
    \begin{align}
        f^\mathrm{z}(z, \kappa_\mathrm{f}(z)) \label{eq:terminal_invariance}&\in\mathcal{Z}_\mathrm{f},\\
        \kappa_\mathrm{f}(z) &\in U,\label{eq:terminal_U_constr}\\
        \mathbb{P}\{\mathbf{y}_k\leq y_\mathrm{max}\vert z, \kappa_\mathrm{f}(z)\} &\geq p_\mathrm{max},\\
        \mathbb{P}\{\mathbf{y}_k\geq y_\mathrm{min}\vert z, \kappa_\mathrm{f}(z)\} &\geq p_\mathrm{min}.\label{eq:terminal_P_constr}
    \end{align}
    Moreover, the terminal cost $V_\mathrm{f}: \mathbb{R}^{n_\mathrm{z}}\rightarrow \mathbb{R}$ is continuous and decreases under the terminal controller $\kappa_\mathrm{f}$ inside $\mathcal{Z}_\mathrm{f}$:
    \begin{equation}\label{eq:terminal_cost_decrease}
        V_\mathrm{f}(f^\mathrm{z}(z, \kappa_\mathrm{f}(z))) - V_\mathrm{f}(z) \leq -\ell(z, \kappa_\mathrm{f}(z)), \, \forall z\in\mathcal{Z}_\mathrm{f}.
    \end{equation}
    \end{subequations}
\end{condition}

Under Condition~\ref{assum:terminal_region}, recursive feasibility and stability of the MPC can be straightforwardly proven by following classical deterministic MPC literature, e.g., \cite{rawlings_model_2017}. For now, we treat Condition~\ref{assum:terminal_region} as a general requirement and will provide a design process for the terminal ingredients that satisfy these criteria in Section~\ref{sec:terminal_cond}. To simplify the upcoming theoretical analysis, we assume that the following mild condition holds:
\begin{condition}[Continuity and compactness]\label{assum:f_cont}
    The meta-state transition function is twice continuously differentiable, i.e., $f^\mathrm{z}\in\mathcal{C}_2$. The input constraint set $U$ is compact. The set satisfying the chance constraints, i.e., $\mathbb{T}=\{(z,u) \vert \text{constraints \eqref{eq:chance_const_min}--\eqref{eq:chance_const_max}}\}$, is closed. 
\end{condition}

Keep in mind that the continuity conditions in Assumption~\ref{assum:f_cont} are not restrictive, as $f^\mathrm{z}$ is parametrized as an ANN, and commonly applied activation functions (such as hyperbolic tangent, logistic, etc.) are typically Lipschitz continuous.

Next, we define the optimal cost under the optimal input sequence generated by solving \eqref{eq:optim_problem_meta} at time step $k$, as
\begin{equation}\label{eq:optim_cost_val}
    J_{N}^\star(z_k) \triangleq J_N(z_{k:k+N\vert k}^\star, u_{k:k+N-1\vert k}^\star),
\end{equation}    
then, we formulate the following controllability condition:
\begin{condition}[Weak controllability]\label{assum:controllability}
    There exists a function $\alpha_2\in\mathcal{K}_\infty$ such that for all $z\in\mathcal{Z}_0$. $J_{N}^\star\leq \alpha_2(\|z\|_2)$.
\end{condition}

Condition~\ref{assum:controllability} is common for deterministic MPC stability proofs to guarantee that $J_{N,}^\star$ is a Lyapunov function, see, e.g., \cite{kohler_analysis_2021}. Furthermore, it is only a technical condition as $\bar{z}\in\mathrm{int}(\mathcal{Z}_\mathrm{f})$ trivially satisfies Condition~\ref{assum:controllability} \cite{grune_nonlinear_2017}, where $\mathrm{int}(\mathcal{Z}_\mathrm{f})$ denotes the interior of set $\mathcal{Z}_\mathrm{f}$. Lastly, the upcoming stability proof requires the following condition regarding the stage cost:
\begin{condition}[Positive definite stage cost]\label{property:pos_def_stage_cost}
    The stage cost $\ell$ is a continuous function and it satisfies $\ell(\bar{z},\bar{u})=0$, while $\ell(z, u)>0$ for all $(z, u)\in\mathbb{R}^{n_\mathrm{z}}\times\mathbb{R}^{n_\mathrm{u}}\neq (\bar{z},\bar{u})$. Moreover, there exists an $\alpha_1\in\mathcal{K}_\infty$ function, such that $\ell(z, u)\geq \alpha_1(\|z-\bar{z}\|_2)$.
\end{condition}

Note that the applied stage cost \eqref{eq:stage_cost} trivially satisfies Condition~\ref{property:pos_def_stage_cost} under $Q,R\succ0$. Now that we have specified all the necessary conditions, recursive feasibility and asymptotic stability of the closed-loop system under the MPC scheme \eqref{eq:optim_problem_meta} can be proven straightforwardly, following conventional results from the literature.
\begin{theorem}[Recursive feasibility and stability]\label{thm:rec_feas_and_stab}
    Assume that Conditions~\ref{assum:terminal_region}--\ref{property:pos_def_stage_cost} are satisfied. Suppose the MPC problem \eqref{eq:optim_problem_meta} is feasible at time instance $k=0$ with $z_0\in\mathcal{Z}_0$, then the OCP \eqref{eq:optim_problem_meta} is feasible $\forall k> 0$ and the origin is an asymptotically stable equilibrium point for the closed loop system $z_{k+1}=f^\mathrm{z}(z_k, \kappa_\mathrm{MPC}(z_k))$.
\end{theorem}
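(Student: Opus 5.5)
The plan follows the classical terminal-ingredient argument for deterministic MPC (e.g., \cite{rawlings_model_2017}). Because the MSS model \eqref{eqs:meta-ANN-model} has a deterministic meta-state transition, the closed loop $z_{k+1}=f^\mathrm{z}(z_k,\kappa_\mathrm{MPC}(z_k))$ can be analysed as a deterministic nonlinear system. Here the chance constraints \eqref{eq:chance_const_min}--\eqref{eq:chance_const_max} are simply deterministic state-input constraints $(z,u)\in\mathbb{T}$. Throughout I take $(\bar z,\bar u)=(0,0)$ and assume nominal prediction. That is, the meta-state at time $k+1$ supplied by the observer $\psi$ coincides with $f^\mathrm{z}(z_{k|k},u^\star_{k|k})$, as implied by the standing assumption that the model is an exact MSS representation.

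\textbf{Step 1: existence of a minimiser.} I would first show that the OCP \eqref{eq:optim_problem_meta} attains its minimum whenever it is feasible, so that $J_N^\star$ and $\kappa_\mathrm{MPC}$ are well defined. By Condition~\ref{assum:f_cont}, $U$ is compact, so the input sequence lives in the compact set $U^N$. The map from inputs to the predicted meta-states is continuous because $f^\mathrm{z}\in\mathcal{C}_2$. The feasible set is then closed, being the preimage of the closed set $\mathbb{T}$ and the compact set $\mathcal{Z}_\mathrm{f}$. Since $J_N$ is continuous, the Weierstrass theorem gives a minimiser.

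\textbf{Step 2: recursive feasibility.} Let $u^\star_{k:k+N-1|k}$ be optimal at time $k$ with terminal meta-state $z^\star_{k+N|k}\in\mathcal{Z}_\mathrm{f}$. At time $k+1$ I would use the shifted candidate
\[
\tilde u_{k+1:k+N|k+1}=\bigl(u^\star_{k+1|k},\dots,u^\star_{k+N-1|k},\kappa_\mathrm{f}(z^\star_{k+N|k})\bigr).
\]
Its first $N-1$ stages reproduce the old predicted meta-states, which satisfy the chance and input constraints. The appended input satisfies \eqref{eq:terminal_U_constr}--\eqref{eq:terminal_P_constr} at $z^\star_{k+N|k}$. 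The new terminal state $f^\mathrm{z}(z^\star_{k+N|k},\kappa_\mathrm{f}(z^\star_{k+N|k}))$ lies in $\mathcal{Z}_\mathrm{f}$ by \eqref{eq:terminal_invariance}. Induction on $k$ then gives feasibility for all $k>0$.

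\textbf{Step 3: Lyapunov decrease.} Evaluating the cost of this candidate, telescoping, and applying \eqref{eq:terminal_cost_decrease} gives
\[
J_N^\star(z_{k+1})\le J_N^\star(z_k)-\ell(z_k,u^\star_{k|k})+V_\mathrm{f}(f^\mathrm{z}(z_N,\kappa_\mathrm{f}(z_N)))-V_\mathrm{f}(z_N)+\ell(z_N,\kappa_\mathrm{f}(z_N)),
\]
with $z_N=z^\star_{k+N|k}$. The last three terms are nonpositive, so $J_N^\star(z_{k+1})-J_N^\star(z_k)\le-\alpha_1(\|z_k\|_2)$ by Condition~\ref{property:pos_def_stage_cost}. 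For the sandwich bounds, the lower bound $J_N^\star(z)\ge \ell(z,u^\star)\ge\alpha_1(\|z\|_2)$ requires $V_\mathrm{f}\ge0$, which I would note is implied by \eqref{eq:terminal_cost_decrease} together with the quadratic choice used later. If a nonnegativity argument is needed, I would instead argue that $V_\mathrm{f}$ may be taken nonnegative on $\mathcal{Z}_\mathrm{f}$. The upper bound is Condition~\ref{assum:controllability}. Hence $J_N^\star$ is a Lyapunov function on the positively invariant feasible set $\mathcal{Z}_0$, and standard Lyapunov theory for discrete-time systems yields asymptotic stability of the origin, with region of attraction $\mathcal{Z}_0$.

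\textbf{Main obstacle.} The delicate step is justifying the nominal, deterministic closed-loop interpretation. The meta-state at $k+1$ must equal the model prediction, so that the shifted candidate is actually feasible for the new initial condition $\psi(\cdot)$. I would state this explicitly as a consequence of the exact-representation assumption and the deterministic nature of $f^\mathrm{z}$, since the remaining steps are routine.
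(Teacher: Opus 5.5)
Your proposal is correct and follows the same route as the paper: the paper's proof just invokes the proof of Theorem 2.19 in \cite{rawlings_model_2017}, noting that Condition~\ref{assum:terminal_region} together with the set-point design supplies that reference's Assumption 2.14. Your Steps 1--3 (existence of a minimiser, the shifted candidate with $\kappa_\mathrm{f}$ appended, and the telescoping decrease with the $\alpha_1$/$\alpha_2$ sandwich) are exactly the content of that cited proof. You also make explicit two points the paper leaves implicit: the nominal observer interpretation, and $V_\mathrm{f}\geq 0$, which holds for the quadratic design with $P\succ 0$ and is among the standing assumptions of the cited theorem.
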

\begin{proof}
    The proof of Theorem 2.19 in \cite{rawlings_model_2017} directly implies Theorem~\ref{thm:rec_feas_and_stab}. 
    Note that the set-point selection design outlined in Section~\ref{sec:equilibrium} combined with Condition~\ref{assum:terminal_region} fulfills Assumption 2.14 in \cite{rawlings_model_2017}.
\end{proof}

Consequently, Theorem~\ref{thm:rec_feas_and_stab} proves that the proposed MSS-based SMPC scheme results in a stable closed-loop system. As discussed, Conditions \ref{assum:f_cont} and \ref{assum:controllability} can be viewed as technical conditions, while Condition~\ref{property:pos_def_stage_cost} holds by construction. However, Condition~\ref{assum:terminal_region} can only be guaranteed by careful selection of the terminal ingredients; thus, in the following, we present a design process under which it is guaranteed to hold.

\subsection{Terminal ingredients}\label{sec:terminal_cond}
As discussed in Section~\ref{sec:stability_conds}, guaranteeing recursive feasibility and closed-loop stability of the proposed SMPC scheme relies on the terminal ingredients satisfying Condition~\ref{assum:terminal_region}. Classical deterministic MPC formulations use invariant terminal sets and quadratic terminal costs based on locally stabilizing linear feedback controllers, typically obtained from LQR design, which provide well-understood stability guarantees~\cite{chen_quasi-infinite_1998,rawlings_model_2017}. Beyond these methods, several extensions have been developed, such as enlarged terminal regions~\cite{limon_enlarging_2005}, generalized constraints~\cite{fagiano_generalized_2013}, learning-based constructions~\cite{abdufattokhov_learning_2024}, etc. While these approaches can reduce conservatism or improve performance, an LQR-based terminal design remains appealing due to its simplicity and strong theoretical guarantees. Accordingly, we adopt a standard LQR-based terminal cost and terminal set to ensure recursive feasibility and stability.

Since the linearized system representation \eqref{eq:linearized_dyn} is stabilizable around $\bar{z}$ according to criteria (\textit{iv}) for the set-point and the rank-condition \eqref{eq:controllability_cond}, a local linear quadratic control policy can be synthesized around $(\bar{z}, \bar{u})$. Let us denote the optimal LQ feedback gain with $K$, and the solution of the \emph{discrete-time algebraic Ricatti equation} (DARE) associated with the LQR as $P$. Hence,
\begin{equation}\label{eq:LQ_Lyapunov_eq}
    P = (A^\mathrm{z} + B^\mathrm{z}K)^\top P (A^\mathrm{z} + B^\mathrm{z}K) + Q_\varepsilon + K^\top R K,
\end{equation}
where $R$ is the same weight matrix as in \eqref{eq:stage_cost} and $Q_\varepsilon=Q+\varepsilon I_{n_\mathrm{z}}$ with $\varepsilon$ being a small constant introduced to account for the linearization error of $(A^\mathrm{z}, B^\mathrm{z})$ compared to the nonlinear dynamics in $f^\mathrm{z}$. After solving \eqref{eq:LQ_Lyapunov_eq}, we apply a quadratic terminal cost as $V_\mathrm{f}(z) = (z - \bar{z})^\top P (z - \bar{z})$ and a terminal set as $\mathcal{Z}_\mathrm{f} = \{z: V_\mathrm{f}(z) \leq \gamma\}$.

Next, we show that there always exists a sufficiently small $\gamma$ such that the resulting terminal ingredients satisfy Condition~\ref{assum:terminal_region}. First, note that the chance constraints can be reformulated as general inequality constraints as $g_\mathrm{min}(z,u)\leq 0$ and $g_\mathrm{max}(z,u)\leq 0$, where
\begin{subequations}
\begin{align}
    g_\mathrm{min}(z, u) &\triangleq p_\mathrm{min} -\mathbb{P}\{\mathbf{y}_k\geq y_\mathrm{min}\mid z, u\},\label{eq:g_min}\\
    g_\mathrm{max}(z, u) &\triangleq p_\mathrm{max} -\mathbb{P}\{\mathbf{y}_k\leq y_\mathrm{max}\mid z, u\}.\label{eq:g_max}
\end{align}
\end{subequations}
Finally, the following Lemma proves that the terminal ingredients satisfy Condition~\ref{assum:terminal_region}.
\begin{lemma}[Terminal region]\label{lem:terminal_ingredients}
    Let Assumption~\ref{assum:f_cont} hold and assume the chance constraints \eqref{eq:g_min} and \eqref{eq:g_max} are (locally) Lipschitz continuous around $(\bar{z}, \bar{u})$ with Lipschitz constants of $L_{g_\mathrm{min}}$ and $L_{g_\mathrm{max}}$, respectively. Suppose $(A^\mathrm{z}, B^\mathrm{z})$ in \eqref{eq:linearized_dyn} is stabalizable, and $Q, R\succ 0$. 
    If $P$, $K$ matrices satisfy \eqref{eq:LQ_Lyapunov_eq}, then there exists a sufficiently small $\gamma$, such that the terminal cost $V_\mathrm{f}=(z-\bar{z})^\top P(z-\bar{z})$, terminal controller $\kappa_\mathrm{f}(z)=\bar{z}+K(z-\bar{z})$, and terminal set $\mathcal{Z}_\mathrm{f}=\{z:V_\mathrm{f}(z)\leq\gamma\}$ satisfy Condition~\ref{assum:terminal_region}.
\end{lemma}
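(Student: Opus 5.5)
This is the standard quasi-infinite-horizon argument (Chen and Allgöwer; Rawlings et al.). Shift coordinates so that $(\bar z,\bar u)=(0,0)$; then $\kappa_\mathrm{f}(z)=Kz$. Note that the statement writes $\bar z + K(z-\bar z)$, which should be read as $\bar u+K(z-\bar z)$. Write $A_K = A^\mathrm{z}+B^\mathrm{z}K$, which is Schur because $K$ is the LQR gain of a stabilizable pair with $Q_\varepsilon\succ0$, and $P\succ 0$. The terminal set $\mathcal{Z}_\mathrm{f}=\{z^\top P z\le\gamma\}$ is a compact ellipsoid containing $0$ in its interior, with
\[
\|z\|_2\le \sqrt{\gamma/\lambda_\mathrm{min}(P)}=:r(\gamma)
\]
on $\mathcal{Z}_\mathrm{f}$. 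I will verify the four requirements of Condition~\ref{assum:terminal_region} separately. Each one holds for all $\gamma\le\gamma_j$ for some $\gamma_j>0$, and I will take $\gamma=\min_j\gamma_j$.

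\emph{Input and chance constraints.} Assume, as the set-point construction requires, that $\bar u\in\mathrm{int}(U)$ and that the chance constraints hold strictly at the set-point, i.e.\ $g_\mathrm{min}(0,0)<0$ and $g_\mathrm{max}(0,0)<0$. For input admissibility, $\|Kz\|_2\le\|K\|_2\, r(\gamma)$, so $Kz\in U$ once $r(\gamma)$ is at most the distance from $\bar u$ to $\partial U$. For the chance constraints, local Lipschitz continuity gives
\[
g_\mathrm{max}(z,Kz)\le g_\mathrm{max}(0,0)+L_{g_\mathrm{max}}\,\|(z,Kz)\|_2\le g_\mathrm{max}(0,0)+L_{g_\mathrm{max}}\sqrt{1+\|K\|_2^2}\; r(\gamma).
\]
This is $\le0$ for $\gamma$ small, and the same bound works for $g_\mathrm{min}$. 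If the set-point constraints were only active (equality), this step fails. I would flag that strict feasibility at the set-point is needed, and that this is the natural reading of requirement (\textit{i}).

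\emph{Cost decrease and invariance.} This is the main step. Since $f^\mathrm{z}\in\mathcal{C}_2$ and $f^\mathrm{z}(0,0)=0$, Taylor's theorem gives
\[
f^\mathrm{z}(z,Kz)=A_Kz+\phi(z), \qquad \|\phi(z)\|_2\le L_\phi\|z\|_2^2
\]
on a compact neighbourhood. The constant $L_\phi$ comes from a bound on the Hessian, which exists by Condition~\ref{assum:f_cont}. Expanding and using \eqref{eq:LQ_Lyapunov_eq},
\[
V_\mathrm{f}(f^\mathrm{z}(z,Kz))-V_\mathrm{f}(z) = -z^\top(Q+\varepsilon I+K^\top RK)z + 2z^\top A_K^\top P\phi(z)+\phi(z)^\top P\phi(z).
\]
The two remainder terms are bounded by
\[
\|P\|_2\,\|z\|_2^2\,\big(2\|A_K\|_2L_\phi\|z\|_2+L_\phi^2\|z\|_2^2\big),
\]
which is at most $\varepsilon\|z\|_2^2$ once $\|z\|_2\le r(\gamma)$ with $\gamma$ small enough. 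Hence
\[
V_\mathrm{f}(f^\mathrm{z}(z,Kz))-V_\mathrm{f}(z)\le -z^\top(Q+K^\top RK)z=-\ell(z,Kz),
\]
which is \eqref{eq:terminal_cost_decrease}. The only delicate point is choosing $\gamma$ so that the Taylor neighbourhood contains the ellipsoid; it does once $r(\gamma)$ is below its radius.

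Invariance then follows for free. On $\mathcal{Z}_\mathrm{f}$ the decrease gives $V_\mathrm{f}(f^\mathrm{z}(z,Kz))\le V_\mathrm{f}(z)-\ell(z,Kz)\le V_\mathrm{f}(z)\le\gamma$, so $f^\mathrm{z}(z,Kz)\in\mathcal{Z}_\mathrm{f}$. Continuity of $V_\mathrm{f}$ is immediate. Combining this with the previous paragraph and taking $\gamma$ as the minimum of the finitely many thresholds establishes Condition~\ref{assum:terminal_region}.
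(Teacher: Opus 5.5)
Your proof is correct and is essentially the paper's argument. A quadratic bound on the linearization error, absorbed by the $\varepsilon$-margin in $Q_\varepsilon$, gives \eqref{eq:terminal_cost_decrease} and hence invariance (the paper simply cites Lemma~1 of K\"ohler et al.\ for this step), and Lipschitz/support-function bounds handle the input and chance constraints, with $\gamma$ taken as the minimum of the thresholds. The paper phrases these constraint bounds through $\sqrt{V_\mathrm{f}(z)}$, and its $\gamma_{g_\mathrm{min}}=g_\mathrm{min}^2(\bar z,\bar u)/c_{g_\mathrm{min}}^2$ tacitly needs the same strict feasibility at the set-point that you flag. One small slip: the input step needs $\|K\|_2\,r(\gamma)\le\mathrm{dist}(\bar u,\partial U)$, not $r(\gamma)\le\mathrm{dist}(\bar u,\partial U)$.
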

\begin{proof}
    See Appendix~\ref{sec:appendix_lemma_terminal_ing_proof}.\renewcommand{\qedsymbol}{}
\end{proof}

As $w^\mathrm{(i)}$, $\mu^{(i)}$, and $\Sigma^{(i)}$ are parametrized by Lipschitz continuous ANNs, $g_\mathrm{min}$ and $g_\mathrm{max}$ are compositions of Lipschitz functions, see \eqref{eq:chance_constr_eval_simpl}. Hence, assuming \eqref{eq:g_min} and \eqref{eq:g_max} are Lipschitz continuous is not restrictive. The proof of Lemma~\ref{lem:terminal_ingredients} provides an analytic expression for $\gamma$ to satisfy Condition~\ref{assum:terminal_region}, mainly using (local) Lipschitz continuity arguments. However, computing the local Lipschitz bounds can result in a conservative $\gamma$ level-set, which might cause numerical difficulties when solving \eqref{eq:optim_problem_meta}. In practice, typically sampling-based approaches are favored to obtain a (potentially) less conservative bound for the terminal region, see Appendix~\ref{appendix:terminal_ingredients}.

\begin{remark}[Terminal equality constraint]
A terminal equality condition, i.e., $\mathcal{Z}_\mathrm{f}=\{\bar{z}\}$, can also be applied without requiring any additional steps in the design process. However, that might result in a higher computational demand or even infeasibility on a given horizon, due to the optimization problem having significantly fewer degrees of freedom.
\end{remark}

\subsection{Reachability of the terminal region}\label{sec:reachability}
One aspect of the terminal ingredients is left unexplored, namely, the reachability (more specifically, $N$-step reachability) of the terminal region. Recursive feasibility and stability of the MSS-based SMPC scheme can only be proven for such initial $z_0$ values from which the terminal region is reachable in $N$ steps. Thus, next, we present a systematic approach for the verification of the reachability of the terminal region. The $N$-step backward reachable set associated with the applied terminal region is formally defined as:
\begin{definition}[$N$-step backward reachable set]
    Given the MPC problem \eqref{eq:optim_problem_meta} with a terminal set $\mathcal{Z}_\mathrm{f}$, the $N$-step backward reachable set $\mathcal{Z}_0^{(N)}$ of $\mathcal{Z}_\mathrm{f}$ is the set of initial points $z_0\in\mathbb{R}^{n_\mathrm{z}}$ for which there exists an admissible input sequence $\{u_k\}_{k=0}^{N-1}$ such that $\forall k\in\mathbb{I}_0^{N-1}$ $z_{k+1}=f^\mathrm{z}(z_k, u_k)$, $\{(z_k, u_k)\}_{k=0}^{N-1}$ satisfy constraints \eqref{eq:chance_const_min}--\eqref{eq:input_constr} and $z_N\in\mathcal{Z}_\mathrm{f}$.
\end{definition}

Equivalently, $\mathcal{Z}_0^{(N)}$ is the feasible set of the finite-horizon MPC problem with terminal constraints $\mathcal{Z}_\mathrm{f}$. As $\mathcal{Z}_0^{(N)}$ is generally non-convex and not computable analytically, we adopt a sampling-based verification approach from \cite{bobiti_automated-sampling-based_2018} to construct a formal inner approximation of $\mathcal{Z}_0^{(N)}$. First, we introduce a notation corresponding to the optimal cost for the propagated meta-state value under the MSS-MPC policy, similarly to \eqref{eq:optim_cost_val}, as
\begin{equation}
    J_{N}^{\star,+}(z_k) \triangleq J_{N}^\star(f^\mathrm{z}(z_k, \kappa_\mathrm{MPC}(z_k))).
\end{equation}
Then, we introduce the property function, as
\begin{equation}\label{eq:property_function_F}
    F(z) = \begin{cases}
        J_{N}^{\star,+}(z) - \rho_\mathrm{c} J_{N}^\star(z),\quad \text{if \eqref{eq:optim_problem_meta} is feasible,}\\
        F_\mathrm{inf},\quad \text{otherwise},
    \end{cases}
\end{equation}
where $\rho_\mathrm{c}\in(0,1]$ is a contraction factor, and $F_\mathrm{inf}\in\mathbb{R}_{>0}$ is a positive constant associated with the infeasible points $z\in\mathbb{R}^{n_\mathrm{z}}\backslash \mathcal{Z}_0^{(N)}$. By standard MPC stability theory (see Section~\ref{sec:stability_conds}), $J_{N,}^\star$ satisfies the decrease condition in \eqref{eq:property_function_F} for all $z\in\mathcal{Z}_0^{(N)}$, i.e., $F(z)\leq 0$ over $\mathcal{Z}_0^{(N)}$. In the following, we aim to search for a set $\mathcal{A}\subset\mathbb{R}^{n_\mathrm{z}}$ such that for all $z\in\mathcal{A}$, $F(z)\leq 0$ holds, and thus $\mathcal{A}$ provides a certified approximation for the $N$-step reachable set. 

Following \cite{bobiti_automated-sampling-based_2018}, the search space, i.e., the \emph{region of interest} (ROI), of meta-state space is covered by a collection of axis-aligned hyper-rectangles $\mathcal{B}_{\delta_{z_\mathrm{s}}}(z_\mathrm{s})$ (see Fig.~\ref{fig:DOA_illustr} for illustration) with $\delta_{z_\mathrm{s}}\in\mathbb{R}^{2n_\mathrm{z}}$ containing the distance projections of the hyper-rectangle on the axis, centered at sample points $z_\mathrm{s}$. For each sample $z_\mathrm{s}$, the method verifies the following condition
\begin{subequations}\label{eqs:DOA_tightened_cond}
\begin{align}
    F(z_\mathrm{s})&\leq -\eta(z_\mathrm{s}, \delta_{z_\mathrm{s}}),\\ \eta(z_\mathrm{s}, \delta) &\triangleq a_{z_\mathrm{s}} \max\{\vert\delta_{z_\mathrm{s}}\vert\} + b_{z_\mathrm{s}} + \epsilon(z_\mathrm{s}),
\end{align}
\end{subequations}
where $a_{z_\mathrm{s}}$ bounds the Lipschitz constant of $F$ within the hyper-rectangle~$\mathcal{B}_{\delta_{z_\mathrm{s}}}(z_\mathrm{s})$, computed by using the first order Taylor approximation of $F$ around $z_\mathrm{s}$, while $b_{z_\mathrm{s}}$ bounds the Lagrange remainder of this approximation. Furthermore, $\epsilon(z_\mathrm{s})$ accounts for the magnitude of any jump discontinuity of $F$ inside $\mathcal{B}_{\delta_{z_\mathrm{s}}}(z_\mathrm{s})$.  For exact calculations of the coefficient $a_{z_\mathrm{s}}$, $b_{z_\mathrm{s}}$, $\epsilon(z_\mathrm{s})$, see~\cite{bobiti_automated-sampling-based_2018}. As a result, satisfying the (sampling-based) condition $F(z_\mathrm{s})\leq-\eta(z_\mathrm{s}, \delta_{z_\mathrm{s}})$ guarantees that $F(z)\leq 0$ holds for all $z$ in the hyper-rectangle $\mathcal{B}_{\delta_{z_\mathrm{s}}}(z_\mathrm{s})$. Rectangles that fail the condition are recursively refined and re-evaluated\footnote{See \cite[Algorithm 1]{bobiti_automated-sampling-based_2018} for a detailed description of the refinement and re-evaluation process.}, down to a user-specified minimum size $\delta_\mathrm{min}$. The union of all rectangles that pass the condition forms the set $\mathcal{A}$. Since \eqref{eq:property_function_F} is piece-wise continuous, and $F(z)$ is two-times differentiable over $\mathbb{R}^{n_\mathrm{z}}$, the resulting set $\mathcal{A}$ provides a certified inner approximation of the $N$-step backward reachable set, i.e., $A\subseteq \mathcal{Z}_0^{(N)}$, following Theorem III.4 of~\cite{bobiti_automated-sampling-based_2018}.
\begin{figure}
    \centering
    \includegraphics{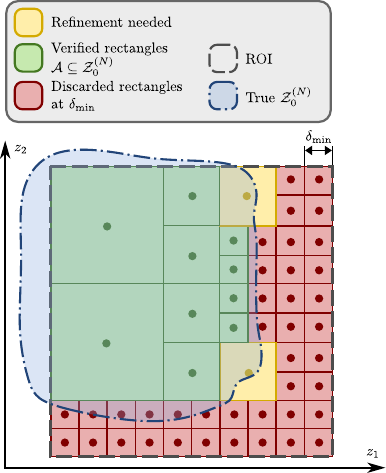}
    \caption{Illustration of the sampling-based approximation of $\mathcal{Z}_0^{(N)}$. The dots in each rectangle represent the samples $z_\mathrm{s}$.}
    \label{fig:DOA_illustr}
\end{figure}

If the $N$-step backward reachable set of the associated terminal region is too small, one could synthesize an additional set-point $(\bar{z}_2, \bar{u}_2)$. Repeating the previously discussed process, first an appropriate terminal region $\mathcal{Z}_{\mathrm{f},2}$, then a feasible set $\mathcal{Z}_{0,2}^{(N)}$ can be constructed, resulting in the combined feasible set $\mathcal{Z}_{0}^{(N)}\cup \mathcal{Z}_{0,2}^{(N)}$. This can be repeated until the combined feasible set is large enough, see Fig.~\ref{fig:reachable_sets_and_PDFs}. When the MPC is initialized, an index $j$ is first determined such that $z_0 \in \mathcal{Z}^{(N)}_{0,j}$. The setpoint associated with $\mathcal{Z}^{(N)}_{0,j}$ is then selected as the reference.

\begin{figure}
    \centering
    \includegraphics{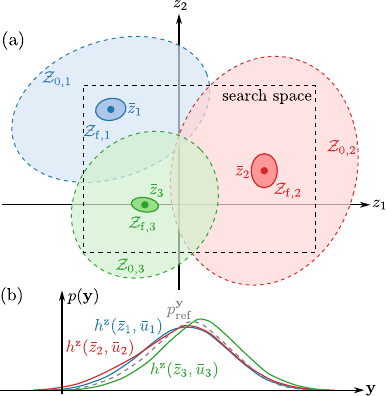}
    \caption{Schematic representation of the backward reachable sets associated with each set-point in Panel (a). Panel (b) illustrates the possibility of multiple output pdfs (and hence set-points) generating similar distributions as the reference.}
    \label{fig:reachable_sets_and_PDFs}
\end{figure}

\section{Online changing reference}\label{sec:changing_ref}

\subsection{Reference tracking MPC formulation}\label{sec:ref_tracking_MPC}
In many practical applications, the control objective may change during operation. Consequently, the offline-determined set-point $(\bar{z}, \bar{u})$ may need to be recomputed. However, finding an appropriate reference point and redesigning the terminal ingredients is time-consuming, often requiring manual adjustments by the user and thus, is typically not feasible online. Since several deterministic MPC formulations can efficiently address reference tracking applications, we aim to extend these methodologies to the MSS-based SMPC approach. Most approaches, such as \cite{limon_nonlinear_2018}, introduce an artificial reference $r=\mathrm{vec}(z_\mathrm{r}, u_\mathrm{r})$ to ensure closed-loop stability even if the desired control objective is not achievable. This formulation requires an additional cost term $V_\mathrm{r}$, which, in the current setting, penalizes the discrepancy between the reference $p_{\mathrm{ref}}^\mathbf{y}(k)$ and the output pdf corresponding to the artificial reference, i.e., $h^\mathrm{z}(z_\mathrm{r}, u_\mathrm{r})$. In practice, $V_\mathrm{r}$ can be implemented as a sampling-based approximation of the KL divergence. This contrasts with the formulation proposed in Section~\ref{sec:stage_cost_and_setpoint}, where a KL divergence-based stage cost was omitted partly due to its computational burden. Reference tracking, however, forms a fundamentally different control problem compared to set-point stabilization and therefore necessitates a computationally more demanding formulation. This reflects a trade-off between the ability to handle online-changing references and maintaining computational simplicity.

With an appropriate choice of offset cost function, recursive feasibility of the resulting MPC problem can be proven under mild assumptions, see, e.g., \cite{limon_nonlinear_2018,kohler_nonlinear_2020b}. Proving exponential stability, however, would require convexity and uniqueness conditions that cannot be guaranteed in the present setting due to the GMM structure of $h^\mathrm{z}$. In this context, the zone-tracking formulation of \cite{soloperto_nonlinear_2023} is particularly attractive as it enables optimization over a generally non-convex set-point manifold. Motivated by this property, we formulate an MSS-based tracking MPC scheme that guarantees stability without terminal conditions, by extending the results of~\cite{soloperto_nonlinear_2023}.

First, we reformulate the MPC cost elements to formally handle online changing setpoints. Since we aim to avoid terminal ingredients, the cost function now consists solely of stage-cost evaluations:
\begin{multline}
    J_N(z_{k:k+N-1\vert k}, u_{k:k+N-1\vert k}, z_\mathrm{r}, u_\mathrm{r}) =\\ \sum_{i=0}^{N-1} \ell(z_{k+1\vert k}, u_{k+1\vert k}, z_\mathrm{r}, u_\mathrm{r}),
\end{multline}
where
\begin{equation}\label{eq:stage_cost_tracking}
    \ell(z,u,z_\mathrm{r}, u_\mathrm{r}) = \|z-z_\mathrm{r}\|_Q^2 + \|u-u_\mathrm{r}\|_R^2,
\end{equation}
with $Q,R\succ 0$ both being symmetric, positive definite weight matrices. Note that the artificial reference should also satisfy requirements (\textit{i})--(\textit{iii}), similar to when choosing $(\bar{z}, \bar{u})$ in the offline case. Requirements (\textit{i}), (\textit{iii}) will be addressed by additional constraints in the optimization problem, while the offset cost will be selected to guarantee satisfaction of  (\textit{ii}). Then, for a given artificial reference $r=\mathrm{vec}(z_\mathrm{r}, u_\mathrm{r})$, the optimal stage cost is defined as
\begin{subequations}
\begin{align}
    \ell_\star(z, z_\mathrm{r}, u_\mathrm{r}) &\triangleq \min_{u}\,\,\ell(z,u,z_\mathrm{r}, u_\mathrm{r})\\
    \text{s.t.}\quad& \text{constraints \eqref{eq:chance_const_min}--\eqref{eq:input_constr}}.
\end{align}
\end{subequations}
Let us consider a user-defined offset cost $V_\mathrm{r} : \mathbb{R}^{n_\mathrm{z}} \times \mathbb{R}^{n_\mathrm{u}}\rightarrow \mathbb{R}$ that quantifies the deviation between the reference $p_\mathrm{ref}^\mathbf{y}(k)$ and the and the realized output pdf associated with the artificial reference, i.e., $h^\mathrm{z}(z_\mathrm{r}, u_\mathrm{r})$. Finally, we denote the set of feasible set-points as
\begin{multline}
    \mathbb{S} = \{\mathrm{vec}(z,u)\in\mathbb{R}^{n_\mathrm{z}+n_\mathrm{u}}\mid z=f^\mathrm{z}(z,u),\\ \text{constraints \eqref{eq:chance_const_min}--\eqref{eq:input_constr}}\}.
\end{multline}
Then, we formulate the following zone-tracking MPC with the meta-state-space representation:
\begin{subequations}\label{eq:tracking_ocp}
\begin{align}
\min_{\substack{z_{k:k+N-1\vert k}\\u_{k:k+N-1\vert k}\\z_\mathrm{r}, u_\mathrm{r}}} \quad &J_N(z_{k:k+N-1\vert k}, u_{k:k+N-1\vert k}, z_\mathrm{r}, u_\mathrm{r}) + V_\mathrm{r}(z_\mathrm{r}, u_\mathrm{r})\\
\textrm{s.t.} \quad &z_{k\vert k} = \psi\left(u_{k-n}^{k-1}, \tilde{y}_{k-n}^{k-1}\right),\label{eq:mpc_encoder2}\\
  & z_{k+i+1\vert k} = f^\mathrm{z}(z_{k+i\vert k}, u_{k+i\vert k}),\\
    &\mathbb{P}\{\mathbf{y}_{k+i\vert k} \geq y_\mathrm{min}\} \geq p_\mathrm{min},\label{eq:chance_const_min2}\\
  &\mathbb{P}\{\mathbf{y}_{k+i\vert k} \leq y_\mathrm{max}\} \geq p_\mathrm{max},\label{eq:chance_const_max2}\\
    & u_{k+i\vert k} \in U,\label{eq:input_constr2}\\
  & \mathrm{vec}(z_\mathrm{r}, u_\mathrm{r})\in\mathbb{S},
\end{align}
\end{subequations}
with $i\in\mathbb{I}_0^{N-1}$. Solving \eqref{eq:tracking_ocp} not only provides an optimal meta-state and input sequence, but also the optimal reference point, as $r^\star=(z_\mathrm{r}^\star, u_\mathrm{r}^\star)$.

\subsection{Offset cost selection}
Having defined the zone-tracking MPC formulation, we now formulate formal requirements for the offset cost. First, let us denote the minima of $V_\mathrm{r}$ as $V^\mathrm{min}_\mathrm{r}$, i.e.,
\begin{equation}
    V^\mathrm{min}_\mathrm{r} \triangleq \min_{r\in\mathbb{S}} V_\mathrm{r}(r),
\end{equation}
and the set of optimal setpoints that achieve the minima of the offset cost, as
\begin{equation}
    \mathbb{S}_\mathrm{o} = \{r\in\mathbb{S}\mid V_\mathrm{r}(r) = V_\mathrm{r}^\mathrm{min}\}.
\end{equation}
Then, the offset cost should satisfy the following condition from \cite{soloperto_nonlinear_2023}:
\begin{condition}[Curve distance]\label{cond:curvatures}
    Let $C(r, \mathbb{S}_\mathrm{o})$ denote the set of continuous curves that start from any $r\in\mathbb{S}$ and end at any $r_\mathrm{o}\in\mathbb{S}_\mathrm{o}$, and belong to $\mathbb{S}$. Given a set-point $r_1\in\mathbb{S}$, the distance curve $d_{\mathbb{S}_\mathrm{o}} : \mathbb{S} \rightarrow \mathbb{R}_{\geq 0}$ is defined as
    \begin{equation}
        d_{\mathbb{S}_\mathrm{o}}(r_1) \triangleq \min_{c\in C(r_1, \mathbb{S}_\mathrm{o})} \mathrm{Length}(c),
    \end{equation}
    where $\mathrm{Length}$ denotes the length of the curve $c$. Then, there exists constants $k_0, k_1 > 0$, such that for any $r= (z_\mathrm{r}, u_\mathrm{r})\in\mathbb{S}$ and $\epsilon\in\left[0, 1\right]$, there exists an $\hat{r}\in\mathbb{S}$, satisfying
    \begin{subequations}
    \begin{align}
        \|\hat{r} - r\|_2 &\leq k_0 \epsilon d_{\mathbb{S}_\mathrm{o}}(r),\label{eq:curvature_assum_1}\\
        V_\mathrm{r}(\hat{r}) - V_\mathrm{r}(r) &\leq -k_1 \epsilon d_{\mathbb{S}_\mathrm{o}}^2(r).\label{eq:curvature_assum_2}
    \end{align}
    \end{subequations}
\end{condition}

Based on \cite{soloperto_nonlinear_2023}, if Condition~\ref{cond:curvatures} is fulfilled, recursive feasibility and stability of the MPC scheme can be proven under mild assumptions (see Section~\ref{sec:changing_ref_stab}). However, verifying that the offset cost satisfies Condition~\ref{cond:curvatures} is non-trivial. For instance, in \cite{soloperto_nonlinear_2023}, it is shown that choosing $V_\mathrm{r}(r) = d_{\mathbb{S}_\mathrm{o}}^2(r)$ satisfies this requirement. However, implementing this cost in practice requires an explicit characterization of $\mathbb{S}_\mathrm{o}$, which is unrealistic in the considered setting (or in general for most real-world applications). Even if such a characterization were available, evaluating the resulting expression would be computationally demanding. Instead, we formulate the offset cost as
\begin{equation}\label{eq:selecetd_offset_cost}
    V_\mathrm{r}(r, k) = \mathrm{KL}(p_\mathrm{ref}^\mathbf{y}(k), h^\mathrm{z}(r)) + \lambda\|r\|_2^2,
\end{equation}
where $\lambda\in\mathbb{R}_{>0}$ is a user-specified regularization weight that can be selected small, so it minimally biases the solution with respect to the KL divergence term. Moreover, $V_\mathrm{r}$ depends on $k$, as $p_\mathrm{ref}^\mathbf{y}$ can change with time, but this dependence is not denoted explicitly in the remainder. In Section~\ref{sec:MSS-MPC}, we argued that online evaluation (approximation) of the KL divergence term is computationally demanding. However, now it only needs to be computed once per cost function evaluation compared to the $N$-time evaluation when the KL term is included in the stage cost. The introduced $\ell_2$ penalty in \eqref{eq:selecetd_offset_cost} intuitively selects the least costly (minimum norm) set-point that (approximately) realizes the reference output pdf. The following lemma formally proves that the proposed offset cost satisfies Condition~\ref{cond:curvatures} around $\mathbb{S}_\mathrm{o}$ and also shows the necessity of the introduced regularization term.
\begin{lemma}[Satisfaction of distance condition]\label{lem:offset_cost}
    Let $\mathbb{S}\subset \mathbb{R}^{n_\mathrm{z}+n_\mathrm{u}}$ be a connected manifold and let the offset cost $V_\mathrm{r}:\mathbb{R}^{n_\mathrm{z}+n_\mathrm{u}}\rightarrow\mathbb{R}_{\geq 0}$ chosen as \eqref{eq:selecetd_offset_cost}. Suppose that $V_\mathrm{r}$ is $\mathcal{C}_2$ on an open neighborhood of $\mathbb{S}$. Then, there exists a sufficiently large $\lambda>0$ regularization coefficient that ensures satisfaction of Condition~\ref{cond:curvatures} in a compact neighbourhood $\breve{\mathbb{S}}$ of $\mathbb{S}_\mathrm{o}$ such that $\mathbb{S}_\mathrm{o}\subset \breve{\mathbb{S}}\subseteq \mathbb{S}$.
\end{lemma}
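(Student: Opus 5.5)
Write $V_\mathrm{r}(r)=D(r)+\lambda\|r\|_2^2$ with $D(r)\triangleq\mathrm{KL}(p_\mathrm{ref}^\mathbf{y},h^\mathrm{z}(r))$. The plan is to show that, for $\lambda$ large, $V_\mathrm{r}$ restricted to $\mathbb{S}$ is strongly convex along the manifold near $\mathbb{S}_\mathrm{o}$. Condition~\ref{cond:curvatures} then follows by taking $\hat{r}$ on a geodesic from $r$ toward its closest optimal point.

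\emph{Uniform curvature on a compact set.} First fix a compact neighbourhood $\breve{\mathbb{S}}\subseteq\mathbb{S}$ of $\mathbb{S}_\mathrm{o}$ inside the open set where $V_\mathrm{r}$ is $\mathcal{C}_2$. Since $D$ is $\mathcal{C}_2$ there, its Hessian is bounded on $\breve{\mathbb{S}}$: $\|\nabla^2 D(r)\|\leq L_D$. Because $\mathbb{S}$ is a $\mathcal{C}_2$ manifold defined by $z=f^\mathrm{z}(z,u)$ (with $f^\mathrm{z}\in\mathcal{C}_2$ by Condition~\ref{assum:f_cont}, plus constraints), its second fundamental form is bounded on $\breve{\mathbb{S}}$. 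Hence the Riemannian Hessian of $D|_\mathbb{S}$ is bounded below by $-c_D$ for some finite $c_D$.

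The Riemannian Hessian of $\lambda\|r\|^2$ equals $2\lambda I$ plus a normal-curvature correction of size $\lambda\|r\|\,\kappa$, where $\|r\|$ and $\kappa$ are bounded on $\breve{\mathbb{S}}$. This correction also scales with $\lambda$, which is the delicate point. I would handle it by shrinking $\breve{\mathbb{S}}$ (or assuming $\mathbb{S}$ locally flat enough) so that $2-\|r\|\kappa\geq 1$. Then choosing $\lambda>c_D$ gives geodesic $\mu$-strong convexity of $V_\mathrm{r}|_{\breve{\mathbb{S}}}$ with $\mu=\lambda-c_D>0$. In the interior, strong convexity also makes $\mathbb{S}_\mathrm{o}\cap\breve{\mathbb{S}}$ a single point $r_\mathrm{o}$. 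This explains why the regularization term is necessary: without it, $D$ may have a nonconvex set of minimizers, since many GMM parameters realize the same pdf.

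\emph{Constructing $\hat{r}$.} For $r\in\breve{\mathbb{S}}$, let $c$ be the minimizing curve realizing $d\triangleq d_{\mathbb{S}_\mathrm{o}}(r)$; this is a geodesic to $r_\mathrm{o}$ parametrized by arc length. Set $\hat{r}=c(\epsilon d)$. Then $\|\hat{r}-r\|_2\leq\mathrm{Length}(c|_{[0,\epsilon d]})=\epsilon d$, so \eqref{eq:curvature_assum_1} holds with $k_0=1$.

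For \eqref{eq:curvature_assum_2}, apply geodesic strong convexity along $c$. For $t=\epsilon\in[0,1]$,
\begin{equation*}
V_\mathrm{r}(\hat{r})\leq(1-\epsilon)V_\mathrm{r}(r)+\epsilon V_\mathrm{r}(r_\mathrm{o})-\tfrac{\mu}{2}\epsilon(1-\epsilon)d^2 .
\end{equation*}
Strong convexity with minimizer $r_\mathrm{o}$ also gives $V_\mathrm{r}(r)-V_\mathrm{r}(r_\mathrm{o})\geq\frac{\mu}{2}d^2$. Combining the two yields
\begin{equation*}
V_\mathrm{r}(\hat{r})-V_\mathrm{r}(r)\leq-\epsilon\bigl(V_\mathrm{r}(r)-V_\mathrm{r}(r_\mathrm{o})\bigr)-\tfrac{\mu}{2}\epsilon(1-\epsilon)d^2\leq-\tfrac{\mu}{2}\epsilon d^2 .
\end{equation*}
So \eqref{eq:curvature_assum_2} holds with $k_1=\mu/2$.

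\emph{Main obstacle.} The hardest step is the first one. It must guarantee geodesic convexity of $\breve{\mathbb{S}}$, meaning that minimizing curves stay inside $\breve{\mathbb{S}}$, which I would obtain by taking a small sublevel set of $V_\mathrm{r}$ around $r_\mathrm{o}$. It must also control the $\lambda$-dependent curvature term. If that fails, I would instead invoke a local Łojasiewicz/quadratic-growth argument at a nondegenerate minimizer, made nondegenerate by $\lambda$.
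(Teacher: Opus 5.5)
You follow the paper's route. Taking $\hat r=c(\epsilon d)$ on a shortest curve gives $k_0=1$, and the decrease comes from $\phi''\geq\mu$ along the curve together with first-order optimality at $r_\mathrm{o}$. Your strong-convexity/quadratic-growth combination reproduces exactly the paper's bound $-\mu\epsilon(1-\epsilon/2)d^2$, with the cleaner constant $k_1=\mu/2$.

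There is a genuine gap, and it is the step you flag yourself; shrinking $\breve{\mathbb{S}}$ does not close it. Along a unit-speed geodesic $c$ of $\mathbb{S}$,
$\frac{\mathrm{d}^2}{\mathrm{d}\rho^2}\lambda\|c(\rho)\|_2^2=2\lambda\bigl(1+\langle c(\rho),c''(\rho)\rangle\bigr)$,
where $c''=\mathrm{II}(c',c')$ is normal to $\mathbb{S}$. The correction $2\lambda\langle r,\mathrm{II}(v,v)\rangle$ therefore scales with $\lambda$ and can be as negative as $-2\lambda\|r\|_2\kappa$ (note the factor $2$ relative to your estimate).
\begin{itemize}
\item Shrinking does not help: as $\breve{\mathbb{S}}$ contracts to $\mathbb{S}_\mathrm{o}$, $\|r\|_2$ and $\kappa$ tend to their values at $r_\mathrm{o}$, not to zero.
\item Large $\lambda$ does not help either. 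Since $\|r_\mathrm{o}\|_2^2\leq\|r^\dagger\|_2^2+\mathrm{KL}(r^\dagger)/\lambda$ for any minimum-norm point $r^\dagger$ of $\mathbb{S}$, increasing $\lambda$ pushes $r_\mathrm{o}$ toward such points. There, second-order optimality only yields $1+\langle r,\mathrm{II}(v,v)\rangle\geq 0$, possibly with equality.
\item Your \L{}ojasiewicz fallback fails for the same reason: $\lambda$ cannot create nondegeneracy in such directions.
\end{itemize}

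The statement as given can actually fail, so no argument closes this without an extra hypothesis. Take $\mathbb{S}$ to be the unit circle in $\mathbb{R}^2$, i.e., the equilibrium set of $f^\mathrm{z}(z,u)=z+z^2+u^2-1$ with inactive constraints. Let $p_\mathrm{ref}^\mathbf{y}=\mathcal{N}(0,1)$ and $h^\mathrm{z}(z,u)=\mathcal{N}(u^2+1-z,1)$. The KL term is then $\frac{1}{2}(u^2+1-z)^2\approx\frac{9}{8}\theta^4$ near its unique minimizer $(1,0)$ on $\mathbb{S}$, where $\theta$ is the angle.
\begin{itemize}
\item Since $\|r\|_2\equiv 1$ on $\mathbb{S}$, $V_\mathrm{r}=\mathrm{KL}+\lambda$ there.
\item Hence $\mathbb{S}_\mathrm{o}=\{(1,0)\}$ for every $\lambda$, and $d_{\mathbb{S}_\mathrm{o}}(r)=|\theta|$.
\item Every $\hat r\in\mathbb{S}$ gives $V_\mathrm{r}(\hat r)-V_\mathrm{r}(r)\geq-\mathrm{KL}(r)\approx-\frac{9}{8}\theta^4$. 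This contradicts \eqref{eq:curvature_assum_2} with $\epsilon=1$ for small $\theta$.
\end{itemize}

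The paper's proof has the same hole. It asserts that the penalty contributes ``exactly $2\lambda I$'' along all directions, which is the ambient Hessian, not the geodesic one. Your ``$\mathbb{S}$ locally flat enough'' is the right repair, but it must enter as a hypothesis rather than be derived by shrinking. One such hypothesis is $1+\langle r,\mathrm{II}(v,v)\rangle\geq\delta>0$ for unit tangent $v$ near $\mathbb{S}_\mathrm{o}$, which is automatic if $\mathbb{S}$ is affine.

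Also fix the order of choices. $\mathbb{S}_\mathrm{o}$ depends on $\lambda$, so the compact set on which you bound the Hessian of the KL term must be fixed first, large enough to contain the minimizers for all large $\lambda$. Only then can you choose $\lambda>c_D$.
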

\begin{proof}
See Appendix~\ref{sec:appendix:offset_cost_proof}.\renewcommand{\qedsymbol}{}
\end{proof}

The proof of Lemma~\ref{lem:offset_cost} introduces $\phi(\rho)=V_\mathrm{r}(c^\ast(\rho))$, where $c^\ast \in C(r,\mathbb{S}^\mathrm{o})$ denotes the shortest curve on $\mathbb{S}$ from $r$ to $\mathbb{S}^\mathrm{o}$, parameterized by arc-length $\rho$. A key requirement is $\phi''(\rho)>0$, which explains the role of the $\ell_2$ regularization in the offset cost: it removes flat directions that would otherwise violate this condition. A possible source of such flat directions is the permutation symmetry of the GMM implementation of $h^\mathrm{z}$. For instance, any transformation $r_1\mapsto r_2$ that interchanges two components, say $(w^{(i)}, \mu^{(i)}, \Sigma^{(i)}) \leftrightarrow (w^{(j)}, \mu^{(j)}, \Sigma^{(j)})$ with $i\neq j$, $i,j\in \mathbb{I}_1^{n_\mathrm{G}}$, satisfies $h^\mathrm{z}(r_1)=h^\mathrm{z}(r_2)$ and therefore $\mathrm{KL}(p_\mathrm{ref}^\mathbf{y}, h^\mathrm{z}(r_1))=\mathrm{KL}(p_\mathrm{ref}^\mathbf{y}, h^\mathrm{z}(r_2))$, regardless of the applied KL approximation technique. If $r_1$ and $r_2$ are connected by a continuous path in $\mathbb{S}$ along which the GMM components continuously interchange, then $\phi'(\rho)= 0$, and consequently $\phi''(\rho)= 0$ along this curve. These flat directions are only eliminated, if $h^\mathrm{z}$ is injective, since $r_1\neq r_2 \Rightarrow h^\mathrm{z}(r_1)\neq h^\mathrm{z}(r_2)$ leading to $V_\mathrm{r}(r_1)\neq V_\mathrm{r}(r_2)$. However, relating to the original stochastic system description, more specifically to the Chapman-Kolmogorov equations, $h^\mathrm{z}$ can only be injective if $H$ in \eqref{eq:chapman_kolg} is also injective. This is a property of the data-generating system, which may be unrealistic, and even if it holds, verifying it would be challenging. In contrast, the $\ell_2$ penalty-based formulation offers a general solution for guaranteeing the satisfaction of Condition~\ref{cond:curvatures}, as it restores the required curvature by contributing exactly $2\lambda I$ to the Hessian along all directions, including the permutation-induced flat ones. The same reasoning can be applied in case the expected output is provided as a reference for the mean, which necessitates an offset cost function of
\begin{equation}
    V_\mathrm{r}(r) = \|y_\mathrm{ref} - \mathbb{E}\{h^\mathrm{z}(r)\}\|_2^2 + \beta\|\mathrm{Var}\{h^\mathrm{z}(r)\}\|_2^2.
\end{equation}
Flat directions even more severely influence this moment-based cost formulation compared to \eqref{eq:selecetd_offset_cost}, as multiple different pdfs can generate the same mean and variance. Hence, similarly to the KL divergence case, the introduction of the regularization term is required.

\subsection{Stability analysis}\label{sec:changing_ref_stab}
Under Lemma~\ref{lem:offset_cost}, the recursive feasibility and stability of the closed-loop system can be proven based on~\cite{soloperto_nonlinear_2023}. For the sake of completeness, we now adapt the proof in~\cite{soloperto_nonlinear_2023} for the MSS-based tracking SMPC formulation. Before the proof, we define the optimal cost under the optimal input sequence similarly as in \eqref{eq:optim_cost_val}, i.e., $J_{N}^\star(z_k, z_\mathrm{r}, u_\mathrm{r})=J_N(z^\star_{k:k+N-1\vert k}, u^\star_{k:k+N-1\vert k}, z_\mathrm{r}, u_\mathrm{r})$, which, naturally, now depends also on the applied artificial reference. Note that $J_{N}^\star$ does not include the offset cost term $V_\mathrm{r}$, therefore, we introduce the term $H_{N}^\star(z_k)\triangleq J_{N}^\star(z_k, z_\mathrm{r}^\star, u_\mathrm{r}^\star) + V_\mathrm{r}(z_\mathrm{r}^\star, u_\mathrm{r}^\star)$.
\begin{theorem}[Stability of the tracking MPC]\label{thm:tracking_stability}
    Let Condition~\ref{cond:curvatures} hold. Suppose that there exist constants $\tau, \eta>0$, such that for any $(z_\mathrm{r}, u_\mathrm{r})\in\mathbb{S}$, $N\in\mathbb{Z}_+$, and any $z\in\mathbb{R}^{n_\mathrm{z}}$ satisfying $l_\star(z, z_\mathrm{r}, u_\mathrm{r})<\tau$, we have
    \begin{equation}\label{eq:cost_controllable}
        J_N^\star(z, z_\mathrm{r}, u_\mathrm{r}) \leq \eta \ell_\star(z, z_\mathrm{r}, u_\mathrm{r}).
    \end{equation}
    Furthermore, let the stage cost be formulated as in \eqref{eq:stage_cost_tracking} with $Q,R\succ 0$ and let the MPC problem \eqref{eq:tracking_ocp} be feasible at time instance $k=0$. Then, for any $\bar{J}_N>0$, there exists a horizon length $\bar{N}\in\mathbb{Z}_+$ such that for every $N>\bar{N}$, and any $z_k$ satisfying $J_{N}^\star(z_k, z_\mathrm{r}, u_\mathrm{r})\leq\bar{J}_N$, the MPC scheme \eqref{eq:tracking_ocp} is recursively feasible, the projection of the set $\mathbb{S}_\mathrm{o}$ to the space $z$ is exponentially stable, and $\lim_{k\to\infty}\bar{H}_{N}^\star(z_k) = 0$, where
    \begin{multline}
        \bar{H}_{N}^\star(z_k) \triangleq H_{N}^\star(z_k) - V_\mathrm{r}^\mathrm{min} =\\ J_{N}^\star(z_k, z_\mathrm{r}^\star, u_\mathrm{r}^\star) + V_\mathrm{r}(z_\mathrm{r}^\star, u_\mathrm{r}^\star) - V_\mathrm{r}^\mathrm{min}.
    \end{multline}
\end{theorem}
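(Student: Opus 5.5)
The proof adapts the argument of \cite{soloperto_nonlinear_2023}, Theorem 1, with the meta-state $z$ playing the role of the deterministic state. The idea is to show that $\bar{H}_N^\star$ decreases along closed-loop trajectories. Recursive feasibility and exponential stability of the projection of $\mathbb{S}_\mathrm{o}$ then follow from standard Lyapunov arguments.

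\emph{Step 1 (candidate cost bounds).} Fix $z_k$ with $J_N^\star(z_k,z_\mathrm{r}^\star,u_\mathrm{r}^\star)\le\bar{J}_N$. Since all stage costs are nonnegative and their sum is at most $\bar{J}_N$, some index $i^\star\in\mathbb{I}_0^{N-1}$ along the optimal predicted trajectory satisfies $\ell(z^\star_{k+i^\star|k},u^\star_{k+i^\star|k},z_\mathrm{r}^\star,u_\mathrm{r}^\star)\le \bar{J}_N/N$. For $N>\bar N$ this is below $\tau$, so the local bound \eqref{eq:cost_controllable} applies at that predicted state. Following \cite{soloperto_nonlinear_2023}, I build a candidate for time $k+1$ in three parts:
\begin{itemize}
\item it follows the shifted optimal inputs up to $i^\star$;
\item it then switches to the solution that steers from $z^\star_{k+i^\star|k}$ towards a modified reference $\hat r$ supplied by Condition~\ref{cond:curvatures}, with $\epsilon$ chosen proportional to $\ell_\star$ at that point;
\item it keeps the reference at $\hat r$.
\end{itemize}
Feasibility of this candidate is inherited from the constraints in $\mathbb{S}$ and from \eqref{eq:cost_controllable}, which yields a feasible continuation of bounded cost. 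This step gives recursive feasibility.

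\emph{Step 2 (the cost inequality).} Changing the reference from $r^\star$ to $\hat r$ affects the tracking cost through \eqref{eq:curvature_assum_1}. With quadratic $\ell$, the triangle inequality bounds the increase in the remaining stage costs by terms of order $\epsilon\, d_{\mathbb{S}_\mathrm{o}}(r^\star)\sqrt{\ell}$ and $\epsilon^2 d^2_{\mathbb{S}_\mathrm{o}}(r^\star)$. By \eqref{eq:curvature_assum_2}, the offset cost drops by $k_1\epsilon d^2_{\mathbb{S}_\mathrm{o}}(r^\star)$. Choosing $\epsilon$ small enough, uniformly in $N$, lets the decrease dominate the quadratic increase. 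The target inequality is
\[
\bar H_N^\star(z_{k+1})-\bar H_N^\star(z_k)\le -c\big(\ell(z_k,u_k,z_\mathrm{r}^\star,u_\mathrm{r}^\star)+d^2_{\mathbb{S}_\mathrm{o}}(r^\star)\big)+\tfrac{\eta\bar J_N}{N}\,(\cdots),
\]
and for $N>\bar N$ the term of order $1/N$ is absorbed into the left side. This is the main obstacle. One must choose $\epsilon$ and $\bar N$ consistently, and handle the cross terms, which I would bound by Young's inequality. The fact that \eqref{eq:cost_controllable} only holds locally is the reason the sublevel set $\bar J_N$ is needed.

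\emph{Step 3 (Lyapunov conclusion).} Let $\mathcal{Z}_\mathrm{o}$ denote the projection of $\mathbb{S}_\mathrm{o}$ to the space $z$. The lower bound $\bar H_N^\star\ge \|z-z_\mathrm{r}^\star\|_Q^2+V_\mathrm{r}(r^\star)-V_\mathrm{r}^{\min}$, together with a curvature-based bound $V_\mathrm{r}-V_\mathrm{r}^{\min}\gtrsim d^2_{\mathbb{S}_\mathrm{o}}$ from Condition~\ref{cond:curvatures}, gives a quadratic lower bound on $\bar H_N^\star$ in terms of the distance of $z$ to $\mathcal{Z}_\mathrm{o}$. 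The cost bound \eqref{eq:cost_controllable} gives a matching quadratic upper bound. The decrease from Step 2 then yields geometric contraction of $\bar H_N^\star$. Hence $\lim_{k\to\infty}\bar H_N^\star(z_k)=0$, and exponential stability of $\mathcal{Z}_\mathrm{o}$ follows. The decrease also shows that the sublevel set $\{\bar H_N^\star\le \bar J_N\}$ is invariant, which closes the recursive feasibility argument.
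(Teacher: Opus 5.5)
\textbf{Gap: the local bound \eqref{eq:cost_controllable} is never made relative.} The decisive gap is how you handle the \emph{local} cost-controllability bound. This conversion is the step the paper's proof carries out before invoking \cite[Proposition 3]{soloperto_nonlinear_2023}.

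Your Step~1 only gives the additive estimate $\ell(z^\star_{k+i^\star|k},u^\star_{k+i^\star|k},z_\mathrm{r}^\star,u_\mathrm{r}^\star)\le\bar J_N/N$. The perturbation in Step~2 is therefore of size $\eta\bar J_N/N$, and it does not shrink as $z_k$ approaches the target set. So it cannot be absorbed into $-c\,(\ell+d^2_{\mathbb{S}_\mathrm{o}})$. At best you obtain practical stability (convergence to an $O(1/N)$ neighbourhood), not exponential stability or $\bar H_N^\star(z_k)\to0$.

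What you need is a \emph{relative} bound on the whole sublevel set, obtained by a case split:
\begin{itemize}
\item if $\ell_\star<\tau$, then \eqref{eq:cost_controllable} applies directly;
\item if $\ell_\star\ge\tau$, then $J_N^\star\le\bar J_N\le(\bar J_N/\tau)\,\ell_\star$.
\end{itemize}
Hence $J_N^\star\le\bar\eta\,\ell_\star$ with $\bar\eta=\max\{\eta,\bar J_N/\tau\}$. With this, your pigeonhole argument bounds the selected stage cost by $J_N^\star/N\le(\bar\eta/N)\,\ell(z_k,u_k,z_\mathrm{r}^\star,u_\mathrm{r}^\star)$. This is a relative error, which the $-\ell(z_k,u_k,z_\mathrm{r}^\star,u_\mathrm{r}^\star)$ term absorbs once $N$ exceeds a threshold of order $\bar\eta(\eta-1)$; that is the origin of the paper's $\bar N$. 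Your Step~3 needs the same inequality for the quadratic upper bound on $\bar H_N^\star$, since \eqref{eq:cost_controllable} alone covers only states with $\ell_\star<\tau$.

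\textbf{Two further problems in the reference update.}

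First, choosing $\epsilon$ proportional to $\ell_\star$ at the switching point is backwards. Suppose the closed loop sits at $z_k=z_\mathrm{r}^\star$ with $(z_\mathrm{r}^\star,u_\mathrm{r}^\star)\in\mathbb{S}\setminus\mathbb{S}_\mathrm{o}$. Then $\ell_\star=0$ while $d_{\mathbb{S}_\mathrm{o}}>0$, so your candidate keeps $\hat r=r^\star$. The $-c\,d^2_{\mathbb{S}_\mathrm{o}}$ term in your target inequality is then never produced. $\epsilon$ must be a fixed small constant, which is what your Step~2 in fact assumes.

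Second, \eqref{eq:tracking_ocp} uses one artificial reference over the whole horizon. Switching to $\hat r$ therefore also penalizes each of the $i^\star$ stages you copy from the shifted optimal solution. The $\epsilon^2 d^2_{\mathbb{S}_\mathrm{o}}$ penalty is multiplied by up to $N$, so $\epsilon$ cannot be chosen uniformly in $N$ as you claim. For fixed $N$ this is harmless: take $\epsilon=O(1/N)$. Alternatively, do one of the following:
\begin{itemize}
\item restrict the pigeonhole search to a window of fixed length, of order $\bar\eta(\eta-1)$;
\item keep $r^\star$ in the tracking-decrease candidate and handle the reference update with a separate candidate.
\end{itemize}
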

\begin{proof}
    Using~\eqref{eq:cost_controllable}, $J_{N}^\star(z_k, z_\mathrm{r}, u_\mathrm{r})\leq \bar{J}_N$ implies that $J_{N}^\star(z_k, z_\mathrm{r}, u_\mathrm{r}) \leq \bar{\eta} \ell_\star(z_k, z_\mathrm{r}, u_\mathrm{r})$, with $\bar{\eta}=\max\{\eta, \bar{J}_N/\tau\}.$ The value of $\bar{J}_N$ can be selected w.l.o.g. to satisfy $\bar{J}_N \geq \eta \tau$. Then, choose $\bar{N} = \max\{\eta, \bar{\eta} (\eta -1)\}$. The stage cost \eqref{eq:stage_cost_tracking} with $Q,R\succ 0$ satisfy Assumption 1 in \cite{soloperto_nonlinear_2023}, then the proof of \cite[Proposition 3]{soloperto_nonlinear_2023} can be directly applied.
\end{proof}

The condition in \eqref{eq:cost_controllable} corresponds to a cost controllability assumption, which is commonly used to establish stability without terminal ingredients~\cite{boccia_stability_2014}. If a quadratically bounded terminal cost can be computed offline for all $r \in \mathbb{S}$ (e.g., following Section~\ref{sec:terminal_cond}), then \eqref{eq:cost_controllable} is automatically satisfied and thus serves only as a technical condition~\cite{limon_nonlinear_2018}. Theorem~\ref{thm:tracking_stability} then implies that, for a given $\bar{J}_N$, a horizon length $N > \bar{N}$ can be selected such that stability is guaranteed if $J_{N}^\star(z_k, z_\mathrm{r}, u_\mathrm{r}) < \bar{J}_N$ along the closed-loop trajectory. In practice, this condition is often enforced explicitly as a constraint in \eqref{eq:tracking_ocp}.

\section{Numerical example}\label{sec:num_examples}

\subsection{MSS model identification}
We identify and control a stochastic system that can be described by the following relations:
\begin{subequations}\label{eq:example_sys}
\begin{align}
    \begin{split}
    \mathbf{x}^{(1)}_{k+1} &= \left(0.2 + 0.8\exp\left[-\left(\mathbf{x}_k^{(2)}+\mathbf{v}_k\right)^2\right]\right) \mathbf{x}_k^{(1)} + \\&\qquad 0.3\sin(\mathbf{x}_k^{(2)}) u_k,\end{split}\\
    \mathbf{x}^{(2)}_{k+1} &= -0.4 \mathbf{x}_k^{(1)} + \left(0.7 + 0.3\sin\left(\mathbf{w}_k\right)\right) \mathbf{x}_k^{(2)},\\
    \mathbf{y}_k &= \mathbf{x}_k^{(1)},
\end{align}
\end{subequations}
where the states $\mathbf{x}_k^{(1)}, \mathbf{x}_k^{(2)}$ are random variables taking values in $\mathbb{R}$, $\mathbf{y}_k$ is the measured output taking values in $\mathbb{R}$, $u_k\in\mathbb{R}$ is the control input, and $\mathbf{v}_k, \mathbf{w}_k$ are are i.i.d. noise processes with $\mathbf{v}_k\sim\mathcal{U}(-0.1, 0.1)$, $\mathbf{w}_k\sim\mathcal{U}(-\pi, \pi)$. The initial state distributions are given as $\mathbf{x}^{(1)}_0, \mathbf{x}_0^{(2)}\sim\mathcal{U}(0,1)$. For model training, $u_k$ is sampled from a uniform distribution, as $u_k\sim\mathcal{U}(0, 5)$ with a sequence length of $N=8000$. The same input sequence has been applied to the system 10 times, and all 10 measurement series have been utilized during training. The settings associated with the model structure are summarized in Table~\ref{tab:hyperparams_example}. To prevent overfitting, $\ell_2$ regularization of the parameters~$\theta$ has been applied during training with a coefficient of~$10^{-6}$. The model is trained for 2000 epochs with the Adam optimizer, then 2000 epochs with the L-BFGS-B method following the identification pipeline adapted from \cite{beintema_meta-statespace_2024,bemporad_l-bfgs-b_2025}. The training time was roughly 15 minutes on a laptop with an Intel Core i7-13700HX processor and 16 GB of RAM. Note that, in the remainder, all reported computational times are associated with this machine.

\begin{table}
    \centering
    \small
    \caption{Model structure for the MSS identification of \eqref{eq:example_sys}.}
    \begin{tabular}{lc}
        \hline
        Parameter name & Value\\
        \hline
        $\psi_\theta$ ANN struct. & 2 hidden layers, 32 neurons\\
        $f_\theta^\mathrm{z}$ ANN struct. & 2 hidden layers, 8 neurons\\
        $w_\theta$, $\mu_\theta$, $\Sigma_\theta$ ANN struct. & 2 hidden layers, 32 neurons\\
        Activation fun. (all ANNs) & hyperbolic tangent (tanh)\\
        Model order $n_\mathrm{z}$ & 3\\
        Gaussian components $n_\mathrm{G}$ & 12\\
        Encoder lag $n$ & 15\\
        \hline
    \end{tabular}
    \label{tab:hyperparams_example}
\end{table}

For testing the identified model, we generate a single input sequence with length $N_\mathrm{test}$ sampled from the same distribution as the training set, i.e., $u_k\sim\mathcal{U}(0,5)$, and apply the generated input to the data-generating system $S=1000$ times. Then, we evaluate the identified MSS model on all generated sequences and compute the average log-likelihood, as
\begin{equation}\label{eq:mean_log_likelihood}
    \frac{1}{N_\mathrm{test} S} \sum_{k=0}^{N_\mathrm{test}-1}\sum_{i=1}^S \log \hat{p}_k^\mathbf{y}(y_k^{(i)}),
\end{equation}
where $\hat{p}_k^\mathbf{y}(y)$ denotes the predicted output pdf by $h^\mathrm{z}_\theta$, while $y_k^{(i)}$ represents the output realization measured from the system at time step $k$ in the $i^\mathrm{th}$ sequence. With such a large $S$, the computed mean log-likelihood accurately represents the model quality. As shown in \cite{beintema_meta-statespace_2024}, the upper limit of \eqref{eq:mean_log_likelihood} can be given by
\begin{equation}\label{eq:log_like_upper_lim}
    -\frac{1}{N_\mathrm{test}} \sum_{k=0}^{N_\mathrm{test}-1} H(\{y_k^{(i)}\}_{i=1}^S),
\end{equation}
where $H$ is the differential entropy, which can be estimated from samples as proposed in \cite{vasicek_test_1976}. The formulation of \eqref{eq:log_like_upper_lim} offers an estimate for the upper limit of \eqref{eq:mean_log_likelihood}. The length of the test sequence is varied as $N_\mathrm{test}=n+\bar{N}_\mathrm{test}$ to evaluate model performance with different $\bar{N}_\mathrm{test}$-step-ahead predictions. The results are shown in Table~\ref{tab:log-likelihoods}. As visible, the model provides log-likelihood values close to the theoretical upper limit for all prediction horizons, which is remarkable, considering the highly nonlinear and stochastic nature of the data-generating system. We also illustrate this by comparing the sampled output pdfs to the MSS model prediction for certain time instances in Fig.~\ref{fig:example_sys_pdfs} (for the $\bar{N}_\mathrm{test}=50$ case). This result highlights that the applied MSS learning approach can effectively capture the stochastic elements in the data-generating system while applying a deterministic meta-state transition function. 

\begin{figure*}
    \centering
    \includegraphics{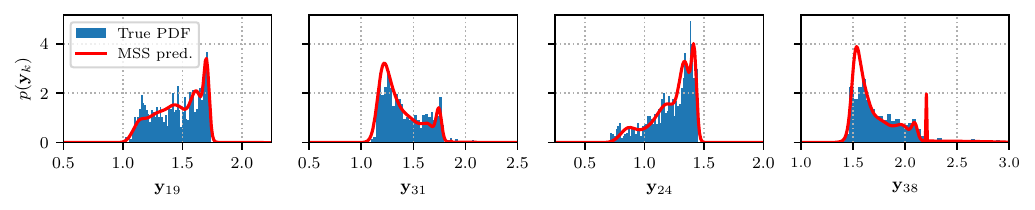}
    \caption{True and modeled output pdfs on the test data set (with a prediction horizon of 50) at randomly selected time instances.}
    \label{fig:example_sys_pdfs}
\end{figure*}

\begin{table}
    \centering
    \small
    \caption{Average log-likelihood of the identified MSS models with different prediction horizons compared to the estimated upper limit.}
    \begin{tabular}{lccccc}
        \hline
        $\bar{N}_\mathrm{test}$ & 5 & 10 & 25 & 50 & 75\\
        \hline
        Est. upper limit & 0.91 & 0.91 & 0.92 & 0.92 & 0.93\\
        Est. log-likelihood & 0.81 & 0.83 & 0.85 & 0.85 & 0.86\\
        \hline
    \end{tabular}
    \label{tab:log-likelihoods}
\end{table}

\subsection{Case 1: Constant output pdf reference}
To show the capabilities of the proposed MSS-SMPC approach\footnote{The code is available at \url{https://github.com/AIMotionLab-SZTAKI/meta-state-space-mpc}}, first, we search for an appropriate equilibrium point. For demonstration purposes, we aim to generate an output pdf that produces two modes by applying the following desired output pdf:
\begin{equation}
    p_\mathrm{des}^\mathbf{y} = 0.5\cdot \mathcal{N}(2.5, 0.1^2) +  0.5\cdot \mathcal{N}(2.9, 0.1^2),
\end{equation}
then we perform a set-point search as discussed in Section~\ref{sec:equilibrium}. The KL divergence is approximated by the discussed Monte Carlo sampling method with $M=500$ samples. The desired pdf and the pdf at the computed equilibrium, evaluated at the sample points, are shown in Fig.~\ref{fig:PDF_search}. In the equilibrium search, we impose chance constraints $y_\mathrm{max}=3.5$, $p_\mathrm{max}=0.9$, and input bounds $u_\mathrm{min}=0$, $u_\mathrm{max}=5$, which are also used in the MPC formulation. A prediction horizon of $N=25$ is chosen, with tuned weights $Q=\mathrm{diag}(10,25,30)$ and $R=0.25$. The construction of the terminal controller follows Section~\ref{sec:terminal_cond}, and the terminal region is obtained via a sampling-based method discussed in Appendix~\ref{appendix:terminal_ingredients}.
The resulting level-set ($\gamma=0.19$) satisfies Condition~\ref{assum:terminal_region}, and the $N$-step backward reachable set is verified to cover a sufficiently large region of the meta-state space by using the sampling-based algorithm in Section~\ref{sec:reachability}. The MPC is evaluated over 10 Monte Carlo runs, each with a 500-step-long closed-loop simulation. Although the SMPC scheme is formulated by using the identified model, the control policy is evaluated on the original stochastic system \eqref{eq:example_sys} (as for all future cases). The results are shown in Fig.~\ref{fig:example_mpc_results}. After removing the transients of each simulation using a change-point detection procedure based on the \emph{Maximum Mean Discrepancy}~(MMD)~\cite{gretton_kernel_2012}, steady-state samples are aggregated across all realizations to estimate the steady-state output pdf, shown in Fig.~\ref{fig:example_mpc_results}. The resulting distribution matches the reference, which demonstrates unprecedented behavior among SMPC approaches, namely that complete output pdfs can be provided as a reference to control stochastic systems. The average computational time of the MPC is only 18 ms.

\begin{figure}
    \centering
    \includegraphics{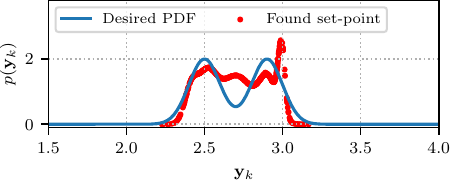}
    \caption{Desired output pdf compared to the one produced by the found equilibrium point at the sample points.}
    \label{fig:PDF_search}
\end{figure}

\begin{figure*}
    \centering
    \includegraphics{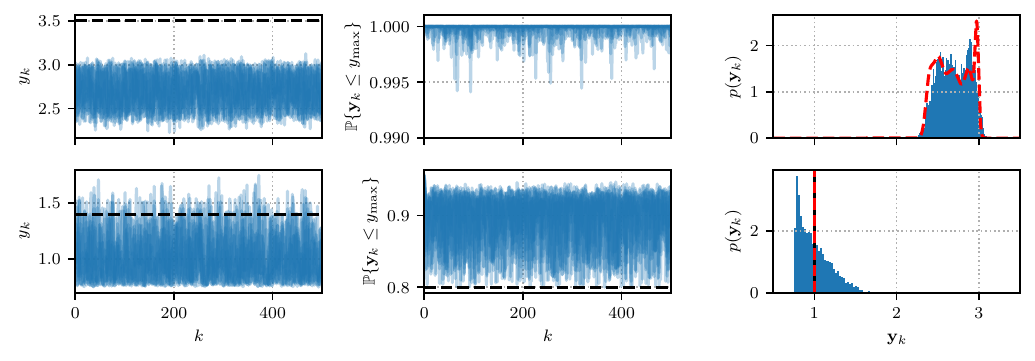}
    \caption[The left panel shows the system outputs for all 10 Monte Carlo runs compared to the reference, and the chance constraint value. The middle panels show the probabilities associated with obeying the applied chance constraint. The right panel shows the output distribution taking into account all 10 Monte Carlo runs, with a vertical red line representing the mean of the output values.]{Results of the MSS-MPC scheme for  Case 1 (upper row) and Case 2 (lower panels). Left panels show the sampled outputs of the system \eqref{eq:example_sys} for all 10 Monte Carlo runs and the output constraint values (\blackdashedline). The middle panels show the chance constraint values compared to the specified probability bound (\blackdashedline). The right panels show the estimated steady-state output distribution using all 10 Monte Carlo runs (after discarding the detected transients). The red dashed line (\reddashedline) represents the designed set-point: the reference output pdf for Case 1, and the reference output mean for Case 2. For the latter, the realized mean is also shown (\blackline).}
    \label{fig:example_mpc_results}
\end{figure*}

\subsection{Case 2: Constant output mean reference}
As a second example, we compute an equilibrium that generates an output mean of $y_\mathrm{ref}=1$ by solving \eqref{eqs:optim_set_point_search} with $\beta=0$, focusing on mean-matching. We use the same input bounds as in Case 1, while modifying the chance constraints to $y_\mathrm{max}=1.4$ and $p_\mathrm{max}=0.8$. The weights are set as $Q=\mathrm{diag}(1.5, 15, 2.5)$, $R=1$. The terminal ingredients are constructed as before, and validated for $\gamma=0.001$. We also illustrate the use of multiple set-points to enlarge the domain of attraction by designing two alternative equilibria that both approximately achieve the desired mean. Their $N$-step backward reachable sets are computed as discussed in Section~\ref{sec:reachability}, and the resulting regions are shown in Fig.~\ref{fig:DOA}. The MPC is evaluated over 10 Monte Carlo runs as in Case 1. Results are shown in Fig.~\ref{fig:example_mpc_results}. Due to the (intentionally) lower $p_\mathrm{max}$, occasional output constraint violations occur, though chance constraints remain satisfied. In Fig.~\ref{fig:example_mpc_results}, each output trajectory is shown with an opaque line; hence, it is visible that these output bound violations are rare. It is also visible that the achieved steady-state output mean accurately matches the reference.

For benchmarking, we consider two GP-MPC schemes: a naive formulation that includes the full state covariance matrices as decision variables, and the computationally efficient zoGP-MPC approach~\cite{lahr_zero-order_2023}, which employs zero-order uncertainty propagation. For implementation simplicity, both methods assume full-state measurements, unlike MSS-MPC, which relies only on input-output data. The GP model is trained using the same data-generation procedure as for the MSS identification. Due to the large dataset, a sparse GP with 50 inducing points is employed. For both GP-MPC schemes, the input and state references are set to the equilibrium corresponding to the MSS-MPC scheme, with the reference for $\mathbf{x}_k^{(2)}$ set to zero, while the weights $Q$ and $R$ are tuned manually. To ensure a fair comparison, the naive GP-MPC is implemented using CasADi, similarly to MSS-MPC. The results, summarized in Table~\ref{tab:gp_mpc_comparison}, show that the naive GP-MPC is computationally demanding, and the sample output mean deviates significantly from the reference, due to the excessive conservatism of linearization-based uncertainty propagation in the chance constraint evaluation. In contrast, zoGP-MPC achieves the lowest computational cost while providing reasonable control performance. Although the primary advantage of the MSS-MPC method is that full output pdfs can be provided as reference, it still achieves superior accuracy in the current mean-matching control task while remaining within real-time computational limits. Note that, similarly to the applied zoGP-MPC implementation, the MSS-MPC scheme could also be implemented using the \texttt{acados} library~\cite{verschueren_acadosmodular_2022}, e.g., through the \texttt{L4acados} toolbox~\cite{lahr_l4acados_2026}, to exploit efficient embedded SQP solvers and decrease computation time. However, such software developments are beyond the scope of this work. Note also that (for implementation simplicity) neither GP-MPC formulation includes terminal constraints. Incorporating these might increase computational cost.
\begin{table}
    \centering
    \small
    \caption{Comparing the MSS-MPC to GP-MPC schemes on Case 2.}
    \label{tab:gp_mpc_comparison}
    \begin{tabular}{lcc}
    \hline
        Method & Avg. comp. time & $1/N \sum\vert y_k-y_\mathrm{ref}\vert$\\
    \hline
        GP-MPC & $126\pm 37$ ms& 0.1737\\
        zoGP-MPC & $\mathbf{4\pm 4}$ \textbf{ms} & 0.0114\\
        MSS-MPC & $49\pm 5$ ms & {\bf 0.0018}\\
    \hline
    \end{tabular}
\end{table}

\begin{figure}
    \centering
    \includegraphics[scale=1.1]{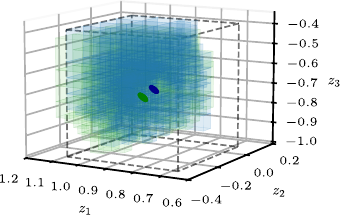}
    \caption{The $N$-step backward reachable sets as a union of semi-transparent cuboids, associated with the two set-points in Case 2. The terminal regions are also shown as opaque ellipsoids. The dashed lines represent the region of interest.}
    \label{fig:DOA}
\end{figure}

\subsection{Case 3: Changing output pdf reference}
Next, we demonstrate the capabilities of the tracking SMPC scheme from Section~\ref{sec:changing_ref} using two simulation case studies. First, we prescribe a time-varying reference for the full output pdf: a gamma distribution for the first 1000 steps, then it is suddenly changed to a beta distribution for the next 1000 steps, as shown in Fig.~\ref{fig:pdf_tracking-example}. Note that exactly achieving these reference distributions might not be possible; the objective is to minimize the KL divergence between the achieved and reference pdfs. For simplicity, in this example, only input bounds $0 \leq u_k \leq 5$ are enforced. The MPC is implemented with $N=60$, $\bar{J}_N=0.15$, and $\lambda=10^{-6}$, $Q=I_3$, and $R=1$.\footnote{Condition \eqref{eq:cost_controllable} is not verified, but $N$ and $\bar{J}_N$ are chosen by trial and error. During evaluations, $\lambda=10^{-6}$ resulted in a numerically stable optimization landscape without oscillating artificial reference values, empirically validating the satisfaction of Condition~\ref{cond:curvatures}.} The KL divergence approximation is implemented as \eqref{eq:KL_approx} with $M=100$ grid points. Samples are only generated once for each reference pdf, and from then on, the same grid points are reused. For each reference, the output pdfs are estimated after discarding the first 50 samples; the remaining samples were verified to represent steady-state distributions using the MMD-based criterion introduced in Case~1. Results in Fig.~\ref{fig:pdf_tracking-example} show that the approach tracks both reference distributions closely and, remarkably, it can handle the sudden change of the pdf reference signal. The average computational time for the MPC is 37 ms, although, as mentioned, chance constraints are not applied.

\subsection{Case 4: Changing output mean reference}
Finally, the mean-matching offset cost is applied with a time-varying mean reference. All MPC settings are as in Case 3, but with an additional chance constraint $\mathbb{P}\{\mathbf{y}_k \leq 3\} \geq 0.9$. As shown in Fig.~\ref{fig:mean_tracking_example}, the method tracks the reference mean accurately. Between $k=800,\dots, 1000$, the reference is infeasible due to the output (chance) constraint; in this case, the MPC drives the system as close to the reference as possible, by adhering to the constraints. This highlights the flexibility and efficiency of the proposed approach. The average MPC computational time is 139 ms, which remains efficient despite the longer horizon and added chance constraints.
\begin{figure}
    \centering
    \includegraphics{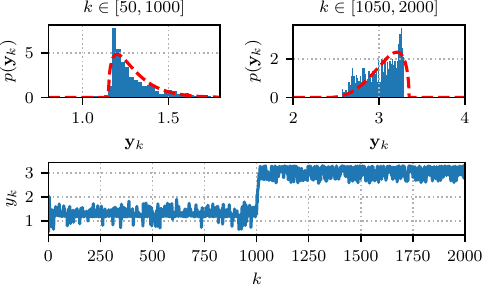}
    \caption[Results of the set-point tracking SMPC approach with Gaussian output pdf references. The upper panels show the sample distribution of the output for each reference (after discarding transients) compared to the control objective. The lower panel shows the resulting output signal.]{Results for Case 3. The upper panels show the sample distribution of the output for each reference (after discarding transients) compared to the control objective (\reddashedline). The lower panel shows the resulting output signal.}
    \label{fig:pdf_tracking-example}
\end{figure}
\begin{figure}
    \centering
    \includegraphics{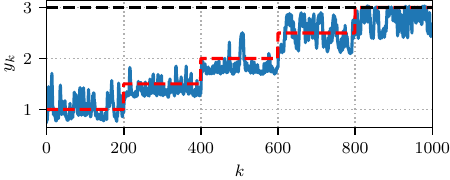}
    \caption[Results of the mean-matching SMPC approach with online changing reference. The black dashed line represents the output constraint, while the red dashed line corresponds to $y_\mathrm{ref}(k)$.]{Results of Case 4. The black dashed line (\blackdashedline) represents the output constraint, while the red dashed line (\reddashedline) corresponds to the output mean reference.}
    \label{fig:mean_tracking_example}
\end{figure}

\section{Conclusion}\label{sec:conclusion}
In this paper, we have proposed a novel stochastic model-predictive control scheme based on the meta-state-space representation. Utilizing an MSS model, the proposed formulation overcomes various drawbacks of conventional SMPC approaches. The deterministic meta-state transition function realizes a computationally tractable method for uncertainty propagation, and the applied Gaussian Mixture Model parametrization of the output pdf enables the efficient evaluation of the chance constraints. The presented method can be efficiently applied to general nonlinear stochastic systems without requiring linearization or Gaussian uncertainty approximation, which are typically present in other SMPC approaches. The method was originally derived for constant set-point stabilization, then extended to handle online changing control objectives. Finally, we have demonstrated the effectiveness of the proposed methods via an extensive simulation example.

\section*{Acknowledgments}
We thank Gerben Beintema for his initial ideas that contributed to this paper. We also thank Kristóf Floch for guiding us to the applied GP-MPC implementations.

\bibliographystyle{plain}
\bibliography{references}

@article{beintema_meta-statespace_2024,
	title = {Meta-state-space learning: {An} identification approach for stochastic dynamical systems},
	volume = {167},
	shorttitle = {Meta-state–space learning},
	journal = {Automatica},
	author = {Beintema, Gerben I. and Schoukens, Maarten and T\'oth, Roland},
	year = {2024},
	pages = {111787}
}

@Article{andersson_casadi_2019,
  author = {Joel A E Andersson and Joris Gillis and Greg Horn
            and James B Rawlings and Moritz Diehl},
  title = {{CasADi} -- {A} software framework for nonlinear optimization and optimal control},
  journal = {Mathematical Programming Computation},
  volume = {11},
  number = {1},
  pages = {1--36},
  year = {2019}
}

@article{bemporad_l-bfgs-b_2025,
	title = {An {L}-{BFGS}-{B} {Approach} for {Linear} and {Nonlinear} {System} {Identification} {Under} $\ell_1$ and {Group}-{Lasso} {Regularization}},
	volume = {70},
	number = {7},
	journal = {IEEE Transactions on Automatic Control},
	author = {Bemporad, Alberto},
	year = {2025},
	pages = {4857--4864}
}

@article{schwarm_chance-constrained_1999,
	title = {Chance-constrained model predictive control},
	volume = {45},
	number = {8},
	journal = {AIChE Journal},
	author = {Schwarm, Alexander T. and Nikolaou, Michael},
	year = {1999},
	pages = {1743--1752}
}

@article{van_hessem_stochastic_2006,
	title = {Stochastic closed-loop model predictive control of continuous nonlinear chemical processes},
	volume = {16},
	number = {3},
	journal = {Journal of Process Control},
	author = {Van Hessem, Dennis and Bosgra, Okko},
	year = {2006},
	pages = {225--241}
}

@inproceedings{herceg_scenario-based_2025,
	title = {A {Scenario}-{Based} {Model} {Predictive} {Control} {Scheme} for {Pandemic} {Response} {Through} {Non}-{Pharmaceutical} {Interventions}},
	booktitle = {Proc. of the {IEEE} {Conference} on {Control} {Technology} and {Applications}},
	author = {Herceg, Domagoj and Dell'Oro, Marco and Bertollo, Riccardo and Miura, Fuminari and de Klaver, Paul and Breschi, Valentina and Krishnamoorthy, Dinesh and Salazar, Mauro},
	year = {2025},
	pages = {139--144}
}

@article{buelta_chance-constrained_2024,
	title = {Chance-constrained stochastic optimal control of epidemic models: {A} fourth moment method-based reformulation},
	volume = {183},
	shorttitle = {Chance-constrained stochastic optimal control of epidemic models},
	journal = {Computers in Biology and Medicine},
	author = {Buelta, Almudena and Olivares, Alberto and Staffetti, Ernesto},
	year = {2024},
	pages = {109283}
}

@inproceedings{liu_stochastic_2015,
	title = {Stochastic predictive control for lane keeping assistance systems using a linear time-varying model},
	booktitle = {Proc. of the {American} {Control} {Conference}},
	author = {Liu, Changchun and Carvalho, Ashwin and Schildbach, Georg and Hedrick, J. Karl},
	year = {2015},
	pages = {3355--3360}
}

@article{mesbah_stochastic_2016,
	title = {Stochastic {Model} {Predictive} {Control}: {An} {Overview} and {Perspectives} for {Future} {Research}},
	volume = {36},
	shorttitle = {Stochastic {Model} {Predictive} {Control}},
	number = {6},
	journal = {IEEE Control Systems Magazine},
	author = {Mesbah, Ali},
	year = {2016},
	pages = {30--44}
}

@article{cannon_stochastic_2012,
	title = {Stochastic tube {MPC} with state estimation},
	volume = {48},
	number = {3},
	journal = {Automatica},
	author = {Cannon, Mark and Cheng, Qifeng and Kouvaritakis, Basil and Raković, Saša V.},
	year = {2012},
	pages = {536--541}
}

@phdthesis{beintema_datadriven_2024,
	type = {Phd {Thesis}},
	title = {Data–driven {Learning} of {Nonlinear} {Dynamic} {Systems}: {A} {Deep} {Neural} {State}–{Space} {Approach}},
	shorttitle = {Data–driven {Learning} of {Nonlinear} {Dynamic} {Systems}},
	school = {Eindhoven University of Technology},
	author = {Beintema, Gerben I.},
	year = {2024}
}

@article{chen_quasi-infinite_1998,
	title = {A {Quasi}-{Infinite} {Horizon} {Nonlinear} {Model} {Predictive} {Control} {Scheme} with {Guaranteed} {Stability}},
	volume = {34},
	number = {10},
	journal = {Automatica},
	author = {Chen, H. and Allgöwer, F.},
	year = {1998},
	pages = {1205--1217}
}

@book{rawlings_model_2017,
	edition = {2},
	title = {Model {Predictive} {Control}: {Theory}, {Computation}, and {Design}},
	publisher = {Nob Hill Publishing},
	author = {Rawlings, J.B. and Mayne, D.Q. and Diehl, M.},
	year = {2017},
}

@book{grune_nonlinear_2017,
    series = {Communications and {Control} {Engineering}},
    title = {Nonlinear {Model} {Predictive} {Control}},
    publisher = {Springer International Publishing},
    author = {Grüne, Lars and Pannek, Jürgen},
    year = {2017}
}

@ARTICLE{limon_nonlinear_2018,
    author={Limon, Daniel and Ferramosca, Antonio and Alvarado, Ignacio and Alamo, Teodoro},
    journal={IEEE Transactions on Automatic Control}, 
    title={Nonlinear MPC for Tracking Piece-Wise Constant Reference Signals}, 
    year={2018},
    volume={63},
    number={11},
    pages={3735-3750}
}

@article{vasicek_test_1976,
    author = {Vasicek, Oldrich},
    title = {A Test for Normality Based on Sample Entropy},
    journal = {Journal of the Royal Statistical Society: Series B (Methodological)},
    volume = {38},
    number = {1},
    pages = {54-59},
    year = {1976}
}

@book{paul_stochastic_2013,
    address = {Heidelberg},
    edition = {2nd},
    title = {Stochastic {Processes}: {From} {Physics} to {Finance}},
    shorttitle = {Stochastic {Processes}},
    publisher = {Springer International Publishing},
    author = {Paul, Wolfgang and Baschnagel, Jörg},
    year = {2013}
}

@book{goodfellow_deep_2016,
    title={Deep Learning},
    author={Ian Goodfellow and Yoshua Bengio and Aaron Courville},
    publisher={MIT Press},
    year={2016}
}

@techreport{bishop_mixture_1994,
    title = {Mixture {Density} {Networks}},
    institution = {Neural Computing Research Group, Aston University},
    author = {Bishop, Christopher M.},
    year = {1994},
}

@inproceedings{cui_comparison_2015,
  title={Comparison of Kullback-Leibler divergence approximation methods between Gaussian mixture models for satellite image retrieval},
  author={Cui, Shiyong and Datcu, Mihai},
  booktitle={Proc. of the IEEE International Geoscience and Remote Sensing Symposium},
  pages={3719--3722},
  year={2015}
}

@INPROCEEDINGS{hershey_approximating_2007,
  author={Hershey, John R. and Olsen, Peder A.},
  booktitle={Proc. of the IEEE International Conference on Acoustics, Speech and Signal Processing}, 
  title={Approximating the Kullback Leibler Divergence Between Gaussian Mixture Models}, 
  year={2007},
  pages={IV-317--IV-320},
}

@article{botev_normal_2017,
    author = {Botev, Z. I.},
    title = {The Normal Law Under Linear Restrictions: Simulation and Estimation via Minimax Tilting},
    journal = {Journal of the Royal Statistical Society Series B: Statistical Methodology},
    volume = {79},
    number = {1},
    pages = {125-148},
    year = {2016}
}

@article{gen_numerical_1992,
    author = {Alan Genz},
    title = {Numerical Computation of Multivariate Normal Probabilities},
    journal = {Journal of Computational and Graphical Statistics},
    volume = {1},
    number = {2},
    pages = {141--149},
    year = {1992}
}

@article{kohler_nonlinear_2020,
    title = {A {Nonlinear} {Model} {Predictive} {Control} {Framework} {Using} {Reference} {Generic} {Terminal} {Ingredients}},
    volume = {65},
    number = {8},
    journal = {IEEE Transactions on Automatic Control},
    author = {Köhler, Johannes and Müller, Matthias A. and Allgöwer, Frank},
    year = {2020},
    pages = {3576--3583}
}

@article{conte_distributed_2016,
    title = {Distributed synthesis and stability of cooperative distributed model predictive control for linear systems},
    volume = {69},
    journal = {Automatica},
    author = {Conte, Christian and Jones, Colin N. and Morari, Manfred and Zeilinger, Melanie N.},
    year = {2016},
    pages = {117--125}
}

@phdthesis{kohler_analysis_2021,
    type = {Phd {Thesis}},
    title = {Analysis and design of {MPC} frameworks for dynamic operation of nonlinear constrained systems},
    school = {Universität Stuttgart},
    author = {Köhler, Johannes},
    year = {2021}
}

@article{nurbayeva_nonlinear_2025,
    title = {Nonlinear model predictive control with set terminal constraint for safe robot motion planning via speed and separation monitoring},
    volume = {154},
    journal = {Control Engineering Practice},
    author = {Nurbayeva, Aigerim and Rubagotti, Matteo},
    year = {2025},
    pages = {106155}
}

@article{kohler_nonlinear_2020b,
	title = {A nonlinear tracking model predictive control scheme for dynamic target signals},
	volume = {118},
	journal = {Automatica},
	author = {Köhler, Johannes and Müller, Matthias A. and Allgöwer, Frank},
	year = {2020},
	pages = {109030}
}

@article{boccia_stability_2014,
    title = {Stability and feasibility of state constrained {MPC} without stabilizing terminal constraints},
    volume = {72},
    journal = {Systems \& Control Letters},
    author = {Boccia, Andrea and Grüne, Lars and Worthmann, Karl},
    year = {2014},
    pages = {14--21}
}

@article{soloperto_nonlinear_2023,
    title = {A {Nonlinear} {MPC} {Scheme} for {Output} {Tracking} {Without} {Terminal} {Ingredients}},
    volume = {68},
    number = {4},
    journal = {IEEE Transactions on Automatic Control},
    author = {Soloperto, Raffaele and Köhler, Johannes and Allgöwer, Frank},
    year = {2023},
    pages = {2368--2375}
}

@inproceedings{pasqualetti_controllability_2014,
    title = {Controllability metrics, limitations and algorithms for complex networks},
    booktitle = {Proc. of the {American} {Control} {Conference}},
    author = {Pasqualetti, Fabio and Zampieri, Sandro and Bullo, Francesco},
    year = {2014},
    pages = {3287--3292}
}

@article{vandenberghe_determinant_1998,
    title = {Determinant {Maximization} with {Linear} {Matrix} {Inequality} {Constraints}},
    volume = {19},
    number = {2},
    journal = {SIAM Journal on Matrix Analysis and Applications},
    author = {Vandenberghe, Lieven and Boyd, Stephen and Wu, Shao-Po},
    year = {1998},
    pages = {499--533}
}

@article{costa_optimal_2012,
    title = {Optimal mean–variance control for discrete-time linear systems with {Markovian} jumps and multiplicative noises},
    volume = {48},
    number = {2},
    journal = {Automatica},
    author = {Costa, Oswaldo L. V. and Oliveira, Alexandre de},
    year = {2012},
    pages = {304--315}
}

@article{hakobyan_risk-aware_2019,
    title = {Risk-{Aware} {Motion} {Planning} and {Control} {Using} {CVaR}-{Constrained} {Optimization}},
    volume = {4},
    number = {4},
    journal = {IEEE Robotics and Automation Letters},
    author = {Hakobyan, Astghik and Kim, Gyeong Chan and Yang, Insoon},
    year = {2019},
    pages = {3924--3931}
}

@article{limon_enlarging_2005,
    title = {Enlarging the domain of attraction of {MPC} controllers},
    volume = {41},
    number = {4},
    journal = {Automatica},
    author = {Limon, D. and Alamo, T. and Camacho, E. F.},
    year = {2005},
    pages = {629--635}
}

@article{fagiano_generalized_2013,
    title = {Generalized terminal state constraint for model predictive control},
    volume = {49},
    number = {9},
    journal = {Automatica},
    author = {Fagiano, Lorenzo and Teel, Andrew R.},
    year = {2013},
    pages = {2622--2631}
}

@article{abdufattokhov_learning_2024,
    title = {Learning {Lyapunov} terminal costs from data for complexity reduction in nonlinear model predictive control},
    volume = {34},
    number = {13},
    journal = {International Journal of Robust and Nonlinear Control},
    author = {Abdufattokhov, Shokhjakhon and Zanon, Mario and Bemporad, Alberto},
    year = {2024},
    pages = {8676--8691}
}

@article{bobiti_automated-sampling-based_2018,
    title = {Automated-{Sampling}-{Based} {Stability} {Verification} and {DOA} {Estimation} for {Nonlinear} {Systems}},
    volume = {63},
    number = {11},
    journal = {IEEE Transactions on Automatic Control},
    author = {Bobiti, Ruxandra and Lazar, Mircea},
    year = {2018},
    pages = {3659--3674}
}

@article{cannon_probabilistic_2009,
    title = {Probabilistic {Constrained} {MPC} for {Multiplicative} and {Additive} {Stochastic} {Uncertainty}},
    volume = {54},
    number = {7},
    journal = {IEEE Transactions on Automatic Control},
    author = {Cannon, Mark and Kouvaritakis, Basil and Wu, Xingjian},
    year = {2009},
    pages = {1626--1632}
}

@inproceedings{kocijan_gaussian_2004,
    title = {Gaussian process model based predictive control},
    volume = {3},
    booktitle = {Proc. of the {American} {Control} {Conference}},
    author = {Kocijan, J. and Murray-Smith, R. and Rasmussen, C.E. and Girard, A.},
    year = {2004},
    pages = {2214--2219}
}

@article{hewing_cautious_2020,
    title = {Cautious {Model} {Predictive} {Control} {Using} {Gaussian} {Process} {Regression}},
    volume = {28},
    number = {6},
    journal = {IEEE Transactions on Control Systems Technology},
    author = {Hewing, Lukas and Kabzan, Juraj and Zeilinger, Melanie N.},
    year = {2020},
    pages = {2736--2743}
}

@inproceedings{koller_learning-based_2018,
    title = {Learning-{Based} {Model} {Predictive} {Control} for {Safe} {Exploration}},
    booktitle = {Proc. of the {IEEE} {Conference} on {Decision} and {Control}},
    author = {Koller, Torsten and Berkenkamp, Felix and Turchetta, Matteo and Krause, Andreas},
    year = {2018},
    pages = {6059--6066}
}

@article{gretton_kernel_2012,
    title = {A {Kernel} {Two}-{Sample} {Test}},
    volume = {13},
    number = {25},
    journal = {Journal of Machine Learning Research},
    author = {Gretton, Arthur and Borgwardt, Karsten M. and Rasch, Malte J. and Schölkopf, Bernhard and Smola, Alexander},
    year = {2012},
    pages = {723--773}
}

@article{lahr_l4acados_2026,
    title = {L4acados: {Learning}-{Based} {Models} for acados, {Applied} to {Gaussian} {Process}-{Based} {Predictive} {Control}},
    shorttitle = {L4acados},
    journal = {IEEE Transactions on Control Systems Technology},
    author = {Lahr, Amon and Näf, Joshua and Wabersich, Kim P. and Frey, Jonathan and Siehl, Pascal and Carron, Andrea and Diehl, Moritz and Zeilinger, Melanie N.},
    year = {2026},
    pages = {1--15}
}

@article{lahr_zero-order_2023,
    title = {Zero-order optimization for {Gaussian} process-based model predictive control},
    volume = {74},
    journal = {European Journal of Control},
    author = {Lahr, Amon and Zanelli, Andrea and Carron, Andrea and Zeilinger, Melanie N.},
    year = {2023},
    pages = {100862}
}

@article{verschueren_acadosmodular_2022,
    title = {acados—a modular open-source framework for fast embedded optimal control},
    volume = {14},
    number = {1},
    journal = {Mathematical Programming Computation},
    author = {Verschueren, Robin and Frison, Gianluca and Kouzoupis, Dimitris and Frey, Jonathan and Duijkeren, Niels van and Zanelli, Andrea and Novoselnik, Branimir and Albin, Thivaharan and Quirynen, Rien and Diehl, Moritz},
    year = {2022},
    pages = {147--183}
}

\appendix
\section{Sampling-based design of the terminal set}\label{appendix:terminal_ingredients}
The largest level set of $V_\mathrm{f}(z)$ that satisfies the input constraints can be found by solving the following \emph{linear program} (LP), similarly as in \cite{conte_distributed_2016}:
\begin{subequations}\label{eq:u_LP_terminal}
\begin{align}
    \max_{\gamma} \quad&\gamma\\
    \text{s.t.} \quad&  \gamma \|P^{-1/2} K^\top H_{\mathrm{u},i}^\top \|_2^2 \leq \left(h_{\mathrm{u},i} - H_{\mathrm{u},i} u_\ast\right)^2, \, \forall i\in\mathbb{I}_1^{l_\mathrm{u}}.
\end{align}
\end{subequations}
Since \eqref{eq:u_LP_terminal} is an LP in only one variable, it can be efficiently solved numerically. The resulting $\gamma_\mathrm{u}$ level set ensures that $\kappa_\mathrm{f}(z)\in U$, for all $V_\mathrm{f}(z)\leq \gamma_\mathrm{u}$. Next, $\gamma_{g_\mathrm{min}}$, and $\gamma_{g_\mathrm{max}}$ can be both obtained by solving the following optimization problem adapted from \cite{nurbayeva_nonlinear_2025}:
\begin{subequations}\label{eq:optim_chance_constr}
\begin{align}
    \min_{z} \quad& (z-\bar{z})^\top P (z- \bar{z})\\
    \text{s.t.} \quad& g_\mathrm{min}(z, \kappa_\mathrm{z}(z)) \leq 0.
\end{align}
\end{subequations}
The optimization \eqref{eq:optim_chance_constr} yields a minimizer $z_{g_\mathrm{min}}$, the corresponding $\gamma$ bound can be computed as $\gamma_{g_\mathrm{min}} = (z_{g_\mathrm{min}}-\bar{z})^\top P (z_{g_\mathrm{min}}- \bar{z})$, then the same can be repeated for $g_\mathrm{max}$ to acquire $\gamma_{g_\mathrm{max}}$. Finally, obtaining a less conservative value for $\gamma_1$ compared to the analytical formulas in \cite{kohler_nonlinear_2020,rawlings_model_2017} is more complex. We adopt a similar method as in \cite{chen_quasi-infinite_1998}. First, select a candidate $\gamma=\min\{\gamma_\mathrm{u}, \gamma_{g_\mathrm{min}}, \gamma_{g_\mathrm{max}}\}$, then solve the following nonlinear optimization problem:
\begin{subequations}\label{eq:terminal_decrease_optim_problem}
\begin{align}
    \min_z \quad& V_\mathrm{f}(f^\mathrm{z}(z, \kappa_\mathrm{f}(z)) + \ell(z, \kappa_\mathrm{f}(z)) - V_\mathrm{f}(z)\label{eq:terminal_decrease_optim}\\
    \text{s.t.} \quad& (z-\bar{z})^\top P(z-\bar{z})\leq \gamma.
\end{align}
\end{subequations}
If the optimal value of \eqref{eq:terminal_decrease_optim} is positive, decrease $\gamma$ until a suitable value is found.
\begin{remark}
    The optimization problem \eqref{eq:terminal_decrease_optim_problem} is generally non-convex; thus, it might get stuck in a local minima, ultimately generating a $\gamma$ value that violates Condition~\ref{assum:terminal_region}. Alternatively, it is possible to generate random vectors or grid the space $\|\Delta z\|_P^2\leq \alpha$, then numerically check that \eqref{eq:terminal_cost_decrease} is satisfied for all samples (or grid points). This alternative approach only provides a sampling-based guarantee, but it might be preferred to overcome the challenges of solving \eqref{eq:terminal_decrease_optim_problem}.
\end{remark}

\section{Proof of Lemma~\ref{lem:terminal_ingredients}}\label{sec:appendix_lemma_terminal_ing_proof}
The first part of the proof closely follows Lemma 1 in \cite{kohler_nonlinear_2020} (also in line with \cite{chen_quasi-infinite_1998,rawlings_model_2017}). Denote $\Delta z_k\triangleq z_k - \bar{z}$ and $\Delta u_k = K\Delta z_k$, thus:
\begin{multline}
    f^\mathrm{z}(z_k, \kappa_\mathrm{f}(z_k)) = f^\mathrm{z}(\bar{z}, \bar{u}) + A^\mathrm{z}\Delta z_k +\\ B^\mathrm{z} \Delta u_k + \Delta f^\mathrm{z}(\Delta z_k),
\end{multline}
where $\Delta f^\mathrm{z}$ is the linearization error associated with $(A^\mathrm{z}, B^\mathrm{z})$. Assumption~\ref{assum:f_cont} directly implies the existence of a $T\in\mathbb{R}_{\geq 0}$ such that
\begin{equation}
    \|\Delta f^\mathrm{z}(\Delta z_k)\|_2 \leq T (1 + \|K\|_2^2) \|\Delta z_k\|_2^2.
\end{equation}
Let the following notations be introduced:
\begin{align}
\lambda_\mathrm{max}^\mathrm{A_K} &\triangleq \lambda_\mathrm{max}((A^\mathrm{z}+B^\mathrm{z}K)^\top P (A^\mathrm{z}+B^\mathrm{z}K))\\
c_\mathrm{f} &\triangleq \frac{1}{\sqrt{\lambda_\mathrm{max}(P)}}\left(\sqrt{\lambda_\mathrm{max}^\mathrm{A_K} + \varepsilon} - \sqrt{\lambda_\mathrm{max}^\mathrm{A_K}}\right),\\
\gamma_1 &\triangleq \frac{\lambda_\mathrm{min}(P) c^2_\mathrm{f}}{T^2(1 + \lambda_\mathrm{max}^2(K))^2}.
\end{align}    
Note that $Q, R\succ 0$ guarantees Property~\ref{property:pos_def_stage_cost} to hold, then, the proof of \cite[Lemma 1]{kohler_nonlinear_2020} directly applies, i.e., for all $\gamma<\gamma_1$, \eqref{eq:terminal_invariance} and \eqref{eq:terminal_cost_decrease} are satisfied.\\
Based on \eqref{eq:g_min} and \eqref{eq:g_max}, we introduce $c_{g_\mathrm{min}},\,c_{g_\mathrm{max}}\in\mathbb{R}$:
\begin{equation}
    c_{g_\mathrm{min}} \triangleq \frac{L_{g_\mathrm{min}}}{\sqrt{\min\{\lambda_\mathrm{min}(Q), \lambda_\mathrm{min}(R)\}}},
\end{equation}
with $c_{g_\mathrm{max}}$ defined similarly. Then, the following additional constants are defined for the input constraints:
\begin{equation}\label{eq:U_c}
    c_\mathrm{u}^i \triangleq \|P^{-1/2} K^\top H_{\mathrm{u},i}\|_2, \quad \forall i\in\mathbb{I}_1^{l_\mathrm{u}},
\end{equation}
where $H_{\mathrm{u},i}$ corresponds to the $i$\textsuperscript{th} half-space constraint in $U$. Note that \eqref{eq:U_c} is based on the support function of the ellipsoid defining $\mathcal{Z}_\mathrm{f}$~\cite{conte_distributed_2016}. To ensure that the constraints \eqref{eq:terminal_U_constr}--\eqref{eq:terminal_P_constr} are satisfied within the terminal region under $\kappa_\mathrm{f}$, the following notation is introduced:
\begin{subequations}
\begin{align*}
    &\gamma_{g_\mathrm{min}} \triangleq  g_\mathrm{min}^2(z_\mathrm{r}, u_\mathrm{r})/c_{g_\mathrm{min}}^2, \quad \gamma_{g_\mathrm{max}} \triangleq  g_\mathrm{max}^2(z_\mathrm{r}, u_\mathrm{r})/c_{g_\mathrm{max}}^2,\\
    &\gamma_\mathrm{u}^i \triangleq \left(H_{\mathrm{u},i}u_\mathrm{r} - h_{\mathrm{u},i}\right)^2/ ({c_\mathrm{u}^i})^2, \quad i\in\mathbb{I}_1^{l_\mathrm{u}}.
\end{align*}
\end{subequations}
Based on \cite[Lemma 3.37]{kohler_analysis_2021},
\begin{equation}\label{eq:gmin_ineq}
    g_\mathrm{min}(z, \kappa_\mathrm{f}(z)) \leq g_\mathrm{min}(z_\mathrm{r}, u_\mathrm{r}) + c_{g_\mathrm{min}} \sqrt{V_\mathrm{f}(z)}.
\end{equation}
To ensure that the chance constraint $g_\mathrm{min}$ is satisfied with the local control policy within $\mathcal{Z}_\mathrm{f}$, the right-hand side of \eqref{eq:gmin_ineq} should be $\leq 0$. Utilizing that $V_\mathrm{f}(z)\leq \gamma$ (on $z\in\mathcal{Z}_\mathrm{f}$), selecting $\gamma\leq\gamma_{g_\mathrm{min}}$ exactly ensures this condition. The same process can be repeated for $g_\mathrm{max}$ and $g_\mathrm{u}^i \triangleq H_{\mathrm{u},i}u - h_{\mathrm{u},i}$, $\forall i\in\mathbb{I}_1^{l_\mathrm{u}}$. Therefore, all considered constraints are satisfied inside the terminal region $\mathcal{Z}_\mathrm{f}=\{z:V_\mathrm{f}(z)\leq \mathrm{min}\{\gamma_1,\gamma_{g_\mathrm{min}}, \gamma_{g_\mathrm{max}}, \gamma_\mathrm{u}^1,\dots \gamma_\mathrm{u}^{l_\mathrm{u}}\}\}$ under the local control law $\kappa_\mathrm{f}$. Note that \eqref{eq:gmin_ineq} holds only if the terminal cost inequality \eqref{eq:terminal_cost_decrease} is fulfilled, but that is trivially satisfied for all $\gamma<\gamma_1$.\placeqed

\section{Proof of Lemma~\ref{lem:offset_cost}}\label{sec:appendix:offset_cost_proof}
Let $c^\ast\in C(r,\mathbb{S}_\mathrm{o})$ denote the shortest curve on $\mathbb{S}$ from $r$ to $\mathbb{S}_\mathrm{o}$, parametrized by arc-length $\rho \in [0, d_{\mathbb{S}_\mathrm{o}}(r)]$, with $c^\ast(0)=r$ and $c^\ast(d_{\mathbb{S}_\mathrm{o}}(r))=r_\mathrm{o} \in \mathbb{S}_\mathrm{o}$. Select $\hat{r}=c^\ast(\epsilon d_{\mathbb{S}_\mathrm{o}}(r))\in\mathbb{S}$. Using that the straight line between $r$ and $\hat{r}$ is shorter than any curve on $\mathbb{S}$ connecting them: $\|\hat{r}-r\|_2 \leq \epsilon d_{\mathbb{S}_\mathrm{o}}(r)$, which satisfies \eqref{eq:curvature_assum_1} with $k_0=1$. Define the cost along the curve as $\phi(\rho) = V_\mathrm{r}(c^\ast(\rho))$. Since $V_\mathrm{r}$ is $\mathcal{C}_2$ and $\breve{\mathbb{S}}$ is compact, the second derivative of $\phi$ w.r.t. the arc-length $\rho$, i.e., $\phi''(\rho)$, is well-defined and continuous. By choosing a sufficiently large $\lambda$, the regularization term ensures that $V_\mathrm{r}$ exhibits a strictly positive lower bound on its geodesic Hessian in $\breve{\mathbb{S}}$ (an appropriate neighbourhood of $\mathbb{S}_\mathrm{o})$. Thus, $\phi''(\rho) \geq\mu>0$. Because $r_\mathrm{o}$ is a minimizer of $V_\mathrm{r}$ on $\mathbb{S}$, the derivative along the manifold vanishes at the end-point when $r_\mathrm{o}\in\mathrm{int}(\mathbb{S})$, i.e., $\phi'(d_{\mathbb{S}_\mathrm{o}}(r)) = 0$. Otherwise, if $r_\mathrm{o}$ lies on the boundary of $\mathbb{S}$, then the first-order optimality condition gives $\phi'(d_{\mathbb{S}_\mathrm{o}}(r))\leq 0$, i.e., that the gradient along $c^\ast$ points inward (or zero), ensuring that no descent directions exist within $\mathbb{S}$. By the Mean Value Theorem applied to $\phi'$, for any $\rho\in[0, d_{\mathbb{S}_\mathrm{o}}(r)]$ such that $c(\rho)\in\breve{\mathbb{S}}$, there exists a point $s\in[\rho, d_{\mathbb{S}_\mathrm{o}}(r)]$ for which:
\begin{equation}\label{eq:grad_bound}
\phi'(d_{\mathbb{S}_\mathrm{o}}(r)) - \phi'(\rho) = \phi''(s)(d_{\mathbb{S}_\mathrm{o}}(r) - \rho).
\end{equation}
Substituting $\phi'(d_{\mathbb{S}_\mathrm{o}}(r)) \leq 0$ and $\phi''(\rho)\geq \mu$, we obtain:
\begin{equation}\label{eq:mvt_result}
\phi'(\rho) \leq -\mu(d_{\mathbb{S}_\mathrm{o}}(r) - \rho).
\end{equation}
Applying the fundamental theorem of calculus to $\phi(\rho)$:
\begin{equation}
    V_\mathrm{r}(\hat{r}) - V_\mathrm{r}(r) = \phi(\epsilon d_{\mathbb{S}_\mathrm{o}}(r)) - \phi(0) = \int_0^{\epsilon d_{\mathbb{S}_\mathrm{o}}(r)} \phi'(\rho)~\mathrm{d}\rho.
\end{equation}
Substituting the gradient bound \eqref{eq:grad_bound}:
\begin{multline}
V_\mathrm{r}(\hat{r}) - V_\mathrm{r}(r) \leq \int_0^{\epsilon d_{\mathbb{S}_\mathrm{o}}(r)} -\mu (d_{\mathbb{S}_\mathrm{o}}(r) - \rho)~\mathrm{d}\rho = \\
-\mu \left( 1 - \frac{\epsilon}{2} \right) \epsilon d_{\mathbb{S}_\mathrm{o}}(r)^2.
\end{multline}
Since $\epsilon \in [0, 1]$, we have $(1 - \epsilon/2) \geq 1/2$. Thus, setting $k_1 = \mu ( 1 - \epsilon/2 )$ satisfies \eqref{eq:curvature_assum_2}.\placeqed

\end{document}